\newtheorem{theorem}{Theorem}[section]
\newtheorem*{theorem*}{Theorem}
\newtheorem*{proposition*}{Proposition}
\newtheorem{lemma}[theorem]{Lemma}
\newtheorem*{lemma*}{Lemma}
\newtheorem{corollary}[theorem]{Corollary}
\newtheorem*{conjecture*}{Conjecture}
\newtheorem*{fact*}{Fact}
\newtheorem*{hypothesis*}{Hypothesis}
\newtheorem{itheorem}[theorem]{Informal Theorem}
\newtheorem*{claim*}{Claim}
\theoremstyle{definition}
\newtheorem{definition}[theorem]{Definition}
\newtheorem{algorithm}[theorem]{Algorithm}
\newtheorem{assumption}[theorem]{Assumption}
\newtheorem{remark}[theorem]{Remark}
\newtheorem*{remark*}{Remark}
\newtheorem*{observation*}{Observation}
\newcommand{\eat}[1]{}
\renewcommand{\Pr}{\ProbOp}
\renewcommand{\epsilon}{\varepsilon}
\newcommand{\instancespace}{\mathbf{I}}
\newcommand{\solutionspace}{\mathbf{S}}
\newcommand{\md}{\mathbf{d}} 
\newcommand{\alg}{\mathcal{A}}
\newcommand{\radius}{\mathrm{radius}}
\newcommand{\rate}{\mathrm{rate}}
\newcommand{\prev}{\mathrm{prev}}
\newcommand{\mnext}{\mathrm{next}}
\newcommand{\cost}{\mathrm{cost}}
\newcommand{\rc}{\mathrm{rc}}
\newcommand{\hclass}{\mathcal{H}}
\newcommand{\psib}{\Psi_B}
\newcommand{\centers}{\mathbf{C}}
\newcommand{\oracle}{\mathcal{O}}
\newcommand{\distribution}{\mathcal{D}}
\newcommand{\dmax}{\mathbf{d}_{\mathrm{max}}}
\newcommand{\dmin}{\mathbf{d}_{\mathrm{min}}}
\newif\ifnotes\notestrue
\definecolor{mygrey}{gray}{0.50}
\newcommand{\notename}[2]{{\textcolor{blue}{\footnotesize{\bf (#1:} {#2}{\bf ) }}}}
\newcommand{\notename}[2]{{}}
\newcommand{\Esymb}{\mathbb{E}}
\newcommand{\Psymb}{\mathbb{P}}
\DeclareMathOperator*{\E}{\Esymb}
 \DeclareMathOperator*{\ProbOp}{\Psymb}
 \DeclareMathOperator*{\argmin}{argmin}
\definecolor{Darkblue}{rgb}{0,0,0.4}
\definecolor{Brown}{cmyk}{0,0.81,1.,0.60}
\definecolor{Purple}{cmyk}{0.45,0.86,0,0}
\title{
Competitive strategies to use ``warm start" algorithms with predictions
}
\author{
Vaidehi Srinivas\thanks{Northwestern University: \href{mailto:vaidehi@u.northwestern.edu}{\texttt{vaidehi@u.northwestern.edu}}, supported by the National Science Foundation (NSF) under grants EECS-2216970 and CCF-2154100, and the Northwestern Presidential Fellowship.  Part of this work was done while V.S. was visiting TTIC (Toyota Technological Institute at Chicago), as part of the IDEAL (Institute for Data, Econometrics, Algorithms, and Learning) summer exchange program.}
~and Avrim Blum\thanks{Toyota Technological Institute at Chicago: \href{mailto:avrim@ttic.edu}{\texttt{avrim@ttic.edu}}, supported in part by the National Science Foundation under grants CCF-2212968 and ECCS-2216899.} 
}
\date{}
\begin{document}

\pagenumbering{gobble}

\maketitle
\sloppy

\begin{abstract}

We consider the problem of learning and using predictions for \emph{warm start} algorithms with predictions.  In this setting, an algorithm is given an instance of a problem, and a \emph{prediction} of the solution.  The runtime of the algorithm is bounded by the distance from the predicted solution to the true solution of the instance.   
Previous work has shown that when instances are drawn iid from some distribution, it is possible to learn an approximately optimal fixed prediction \cite{DILMV21}, and in the adversarial online case, it is possible to compete with the best fixed prediction in hindsight \cite{KBTV22}.  

In this work we give competitive guarantees against stronger benchmarks that consider a set of \(k\) predictions \(\mathbf{P}\).  That is, the ``optimal offline cost" to solve an instance with respect to \(\mathbf{P}\) is the distance from the true solution to the \emph{closest} member of \(\mathbf{P}\).  This is analogous to the \emph{\(k\)-medians} objective function.  In the distributional setting, we show a simple strategy that incurs cost that is at most an \(O(k)\) factor worse than the optimal offline cost.  We then show a way to leverage learnable coarse information, in the form of partitions of the instance space into groups of ``similar" instances, that allows us to potentially avoid this \(O(k)\) factor.  

Finally, we consider an online version of the problem, where we compete against offline strategies that are allowed to maintain a \emph{moving} set of \(k\) predictions or \emph{trajectories}, and are charged for how much the predictions move.  We give an algorithm that does at most \(O(k^4 \ln^2 k)\) times as much work as any offline strategy of \(k\) trajectories.  This algorithm is deterministic (robust to an adaptive adversary), and oblivious to the setting of \(k\).  Thus the guarantee holds for all \(k\) simultaneously.  

\end{abstract}

\newpage
\tableofcontents

\newpage
\pagenumbering{arabic}
\section{Introduction}

\emph{Warm start} algorithms are a practically motivated paradigm of algorithms that make use of \emph{predictions} to improve runtime.  It is the most common framework in which \emph{algorithms with predictions} (also \emph{learning-augmented algorithms}, see \cite{BWCA-AlgosWithPredictions}) are studied for static problems.\footnote{By ``static," we mean problems for which we are given all inputs up front, and we wish to minimize runtime. This is as opposed to work in online, streaming, dynamic, and approximation algorithms.}  
A warm start algorithm solves an instance \(I\) from an instance space \(\instancespace\) to find a solution \(S\) in a solution space \(\solutionspace\).  In addition to \(I\), the algorithm is given a prediction \(P \in \solutionspace\), which can be thought of as a ``guess" of the solution.  The runtime of the algorithm depends on how far \(S\) is from \(P\).  This has applications in settings in which one must solve a series of related instances of a problem, and can learn information about ``typical" solutions.  An example is a network routing problem that must be solved daily, for which the user knows that ``today's network traffic is not too different than yesterday's."

In this setting, the solution space \(\solutionspace\) is a metric space with distance \(\md(\cdot, \cdot)\), and the runtime of an algorithm on instance \(I\) with prediction \(P\) can be bounded by \(\md(P, S)\), where \(S\) is the true solution of instance \(I\).  (We wrap factors that depend on the size of the instance and other details into the distance function \(\md\).)  
Examples that fit in this framework include algorithms for bipartite matching \cite{DILMV21, CSVZ22} and network flows \cite{PZ22, DMVW23}. 

Given warm starts as an algorithmic primitive, the question remains how to obtain and use high quality predictions to achieve good performance.  
Previous work on this problem has focused on competing with the best fixed prediction in hindsight.  In the distributional setting, \cite{DILMV21} and \cite{DMVW23} show that it is possible to PAC-learn a good fixed prediction to use with future instances.  In the online setting, \cite{KBTV22} show that it is possible to compete with the best fixed prediction in hindsight, and their work extends to algorithms with predictions in settings beyond warm starts.  

These results demonstrate settings in which warm starts can give us an advantage.  However, to see a large advantage, we must be in a setting where the best fixed prediction leads to much better performance than a generic default prediction.  This means that the solutions of most instances coming from our distribution must be very close to the prediction, and therefore very similar to each other, which is a restrictive assumption.  Our work explores less restrictive structure that we can take advantage of to achieve good performance.  


We observe that warm start algorithms have many properties that we can take advantage of, that make them powerful and flexible.  They are packaged local search routines that allow you to search the solution space by conceptually growing a ball around a prediction, where the radius of a ball corresponds to the runtime of the warm start algorithm. 
It is also possible to run multiple instantiations of the algorithm in parallel, and simultaneously search the space from multiple points.  
Another useful property of warm starts is that every time we solve an instance, we learn the optimal prediction for that instance, i.e.\
the solution.  
These observations motivate our main question:
\begin{quote}
    \center \it Are there settings in which we can use warm-start algorithms to achieve significantly better performance than any single fixed prediction?
\end{quote}

In this work, we leverage these properties of warm starts to compete against a variety of stronger baselines, answering this question in the affirmative.  In particular, we consider competing with a set of \(k\) predictions \(\mathbf{P} \in \solutionspace^k\).  That is, the cost of \(\mathbf{P}\) with respect to an instance \(I\) with solution \(S\) is the distance from \(S\) to the closest prediction: \(\min_{P \in \mathbf{P}} \md(S, P)\).  This is analogous to the \emph{\(k\)-medians} objective.  We also consider competing with the best {\em function} $h$ from some given class, mapping each instance to one of $k$ predictions. 

It is somewhat surprising that it is possible to achieve any non-trivial guarantees against multiple points, especially in the online setting. 
Previous work approaches this problem by reducing to online convex optimization \cite{KBTV22}.  We note that in the standard online convex optimization setting, we cannot expect to compete with a collection of multiple points and achieve vanishing regret.  In particular, even going from 1 trajectory to 2 trajectories makes the offline problem of choosing the best trajectories go from being convex to being non-convex.  
We observe that vanishing regret is actually a stronger property than what we need.  By accruing constant factors in runtime, and taking advantage of structural properties such as running algorithms in parallel, we show that we can compete with far stronger baselines.    

We note that there is a line of work that studies competing against multiple predictions for algorithms with predictions for \emph{online} problems, see \cite{AGKP22, DILMV22, ACEPS23}, which has a different set of challenges and techniques from the warm start setting, which are discussed more in \Cref{subsec:related-work}.  Our work is, to the best of our knowledge, the first to consider this problem for warm starts.  

\paragraph{Competing against multiple points offline.} Our first setting is simple, but we find it beneficial to provide a formal treatment as it motivates our other two settings.  In this setting, we consider instance-solution pairs drawn from a distribution \(\distribution\) over \(\instancespace \times \solutionspace\).  Here, we can use a simple strategy that learns \(k\) fixed predictions for \(\distribution\) from samples.  Then, for a subsequent instance \((I, S) \sim \distribution\), for each prediction we can run an instantiation of our warm start algorithm ``in parallel,"\footnote{By ``in parallel," we refer to interleaving the \(k\) threads of computation, resulting in a sequential algorithm.} and output the solution of the thread that completes first.   

\begin{itheorem}[Competing against \(k\) fixed points offline, Formally \Cref{thm:k-fixed-points}]
    For a distribution \(\distribution\) over \(\instancespace \times \solutionspace\) it is possible to learn a set of \(k\) predictions \(\widehat{\mathbf{P}}\) that is an \(O(1)\) approximation to
    \[\argmin_{\mathbf{P} \in \solutionspace^k} \E_{(I, S) \sim \distribution} \left[ \min_{P \in \mathbf{P}} \md(S, P) \right].\]

    Furthermore, given a set of predictions \(\widehat{\mathbf{P}}\), we can design a procedure such that the time to solve a new instance \(I \in \instancespace\) with (unknown) true solution \(S \in \solutionspace\) is
    \[O(k) \cdot \min_{P \in \widehat{\mathbf{P}}} \md(S, P).\]
\end{itheorem}
We note that a \(\Omega(k)\) approximation factor in the runtime is necessary without assuming more structure in the problem (\Cref{rem:lower-bound}). 

\paragraph{Learning ``coarse information."} This motivates our second setting, in which we avoid this extra \(\Omega(k)\) factor in the runtime.  
Suppose we had access to extra ``coarse information" about the mapping from instances and solutions.  A warm start algorithm, or indeed any algorithm, already encodes the perfect mapping between instances and solutions.  However, this comes at a high cost to uncover.  We model coarse information as a \(k\)-wise partition of the instance space \(h : \instancespace \rightarrow [k]\), where we expect that two instances \(I_1\) and \(I_2\) that map to the same partition have similar solutions.  In addition, we expect \(h(I)\) to be significantly faster to compute than solving the instance \(I\).  Given an \(h\) that satisfies these properties, we can learn a fixed prediction \(P^{(i)}\) for each partition \(i\) of \(h\) with respect to a distribution \(\distribution\) over \(\instancespace \times \solutionspace\).  Then, for a new instance \((I, S) \sim \distribution\), we can use the single prediction \(P^{(h(I))}\).  This allows us to potentially avoid the \(O(k)\) factor in the runtime of the previous theorem.  

Thus the question is whether we can learn a partition \(h\) of \(\instancespace\) that is informative and well-aligned with \(\distribution\).  We show that given a class of efficiently-computable candidate partitions \(\hclass\), it is possible to learn an \(h \in \hclass\) that is approximately best-aligned with the distribution \(\distribution\).


\begin{itheorem}[Competing with a hypothesis class of \(k\)-wise partitions offline, 
Formally \Cref{thm:competing-with-hypothesis-class}]
    For a distribution \(\distribution\) over \(\instancespace \times \solutionspace\), and a learnable hypothesis class \(\hclass\) of \(k\)-wise partitions of \(\instancespace\), given access to an appropriate ERM oracle over \(\hclass\), it is possible to learn an \(h : \instancespace \rightarrow [k]\) and set of predictions \(\mathbf{P} = \left( P^{(1)}, \dots, P^{(k)} \right)\) that are an \(O(1)\) approximation to 
    \[\min_{h \in \hclass} \min_{\mathbf{P} \in \solutionspace^k} \E_{(I, S) \sim \distribution} \left[ \md(S, P^{(h(I))}) \right].\]

    Furthermore, given an \(h : \instancespace \rightarrow [k]\) and \(\mathbf{P} = \left( P^{(1)}, \dots, P^{(k)} \right)\), we can design a procedure such that the expected time to solve a new instance is 
    \[\E_{(I, S) \sim \distribution} \left[ \mathrm{runtime}(h(I)) + \md(S, P^{(h(I))})\right].\]
\end{itheorem}

We achieve this via a two step procedure, that first learns an approximately good set of \(k\) centers for \(\distribution\), then uses an ERM oracle to find an \(h \in \hclass\) that aligns well with this set of centers.  While there may be no reason to expect that the \(k\) centers chosen in the first step are similar to the best centers for the best \(h \in \hclass\), we show that this strategy is still approximately optimal.  

\paragraph{Competitive algorithms to solve sequences of instances.} Finally, we consider an online setting where instances arrive sequentially.  One of the main motivations of warm start algorithms is settings in which instances that arrive close in time are likely to be similar.  One example is the setting explored in experiments by \cite{DMVW23}, in which they use warm starts to speed up solving a series of flow instances that correspond to image segmentation tasks on frames of a video.  

We formulate the \emph{online ball search} problem to model this setting.  In this problem, on each day there is an hidden point \(S \in \solutionspace\) (corresponding to the true solution of that day's instance) that the algorithm must find.  The algorithm is allowed to search the space \(\solutionspace\) by growing balls around points of its choosing in \(\solutionspace\).  The algorithm succeeds when one of the balls first contains \(S\).  Each of these balls corresponds to running an instance of an algorithm \(\alg(I, P)\) using a particular prediction \(P\).  The radius of the ball corresponds to the number of steps for which the algorithm was run.


In this setting, we design competitive algorithms that compete against the best strategy in hindsight.  The strategies that we compete against are collections of \(k\) predictions that are allowed to move over time.  We call these moving predictions \emph{trajectories}.  The cost of a collection of \(k\) trajectories is the total \emph{hit cost} plus the total \emph{movement cost}.  The hit cost on a given day is the distance from that day's solution to the closest of the trajectories on that day.  The total movement cost is the total distance that each of the \(k\) trajectories moves over time.  This is a generalization of competing with \(k\) fixed points, that includes adaptive strategies.  

Note that while the baseline strategies that we compete against are constrained to using \(k\) predictions, and must pay to move the predictions, the online algorithm has no constraint on how many predictions it can use (i.e., points from which it can search), nor does it pay to move predictions.  We note that the restriction on the offline strategies is necessary, because if the offline strategy was allowed to move arbitrarily, it would move perfectly to the solution on each day and achieve cost 0, making it impossible to compete with.

Our objective is to minimize the total runtime to solve a sequence of instances.  The runtime to solve the instances corresponds to the sum of radii that the algorithm searches, which is the number of steps it runs of the subroutine \(\alg\), plus any additional overhead in running the algorithm.

\begin{itheorem}[Competing with trajectories online efficiently, Formally \Cref{thm:online-competitive-with-runtime}]
    We have an algorithm for online ball search that is \(O(k^4 \ln^2 k)\)-competitive with any set of \(k\) trajectories, where the total runtime of the algorithm is bounded by \(O(1)\) times the number of steps it runs of the algorithms-with-predictions subroutine.  That is, this algorithm does at most \(O(k^4 \ln^2 k)\) times as much work as any set of \(k\) trajectories.     

    Furthermore, the algorithm is deterministic and therefore robust to an adaptive adversary, and oblivious to the setting of \(k\), so the guarantee holds for all \(k\) simultaneously.  
\end{itheorem}

This guarantee states that our online algorithm has runtime comparable, within an \(O(k^4 \ln^2 k)\) factor, to any offline strategy that is allowed to maintain \(k\) moving predictions, and pay the cost on each day from that days solution to the \emph{closest} of the predictions.  In some sense, this algorithm does well when the solutions to the instances we see fall into \(k\) (moving) clusters, and our objective is related to the \(k\)-medians objective.  Thus, it is particularly nice that the algorithm does not need to know \(k\), as in practice we often do not know the number of clusters in our data in advance.  

To approach the online problem, it is illustrative to first consider competing with only one trajectory.  In this setting, an algorithm that always uses the previous day's solution as the next day's prediction is 2-competitive (via the triangle inequality).  Note that this simple guarantee already generalizes what was previously known, as it competes not only with the best fixed prediction in hindsight, but also against adaptive strategies that move over time.  For multiple trajectories, this simple strategy is no longer enough, as the previous day's solution could come from some other trajectory, and be arbitrarily far away from today's solution.  

Our algorithm for multiple trajectories addresses this by searching from \emph{all} previous solutions in parallel.  These ``threads" are run at quadratically decaying rates, i.e. the thread of the \(i\)th most recent solution is run at rate \(\frac{1}{i^2 \ln^2 i}\).  This leaves the issue that the previous solution from the same trajectory as today could be arbitrarily far in the past, and be run at an extremely slow rate.  This is addressed by ``pruning" threads that are no longer fruitful.  That is, when the ball searched around a solution \(i\) fully contains the ball searched around a previous solution \(j\), the algorithm can stop running thread \(j\), as that work is redundant with thread \(i\).  Then, we can increase the rates of the threads that are lower priority than \(j\), and maintain that there is at most one thread being run at each rate \(\frac{1}{i^2 \ln^2 i}\) for each integer \(i\).  In the analysis we show that either the algorithm runs long enough for the previous solution from the same trajectory as today to be elevated to rate \(\ge \frac{1}{k^2 \ln^2 k}\) and eventually solve the instance, or the algorithm terminates more quickly than that, which is even better.  This allows us to bound the competitive ratio.  

To bound the runtime, we note that the amount of work done by each thread is actually dominated by the work done to check if it was been pruned yet.  This blows up the work done by the \(i\)th fastest thread by a factor \(O(i)\).  Since the rate at which the \(i\)th thread takes steps of \(\alg\) is \(\frac{1}{i^2 \ln^2 i}\), the rate at which it does work is \(O(i) \cdot \frac{1}{i^2 \ln^2 i} = O(\frac{1}{i \ln^2 i})\).  Thus, summing the work done over all of the threads results in a convergent series, and we are able to bound the total work done by the algorithm by the work done by the fastest thread.  

We remark that if we relax our objective to only be competitive in sum of radii searched, and not worry about the runtime of the algorithm, the problem is still interesting.  This could correspond to settings where steps of the subroutine are much more costly than steps of the online scheduling algorithm.  In this setting, we can design an algorithm with an improved competitive ratio of \(O(k^2)\) based on a reduction to the \(k\)-server problem (\Cref{cor:online-ball-search-via-k-server}).    

In this work, we demonstrate that warm starts are a powerful algorithmic primitive that can, in many settings, be used in ways that are much stronger than competing with one fixed point in hindsight.  This opens new research directions in more effective ways to learn and use warm starts.  We also observe that the properties of warm starts that we use may hold in many settings that use local-search type algorithms, and we hope that our techniques can be extended to broad-ranging applications.





\subsection{Related Work}
\label{subsec:related-work}

\paragraph{Algorithms with predictions.} \emph{Algorithms with predictions} (also \emph{learning-augmented algorithms}) is a \emph{beyond worst-case} paradigm of algorithm design that has become well-studied in recent years.  In this paradigm, an algorithm solicits an untrusted \emph{prediction} to help solve a worst-case instance.  Generally speaking, the goals are to provide (i) \emph{consistency}: better performance than a worst-case algorithm when the prediction is of high quality, (ii) \emph{robustness}: performance no worse than a worst-case algorithm when the prediction is of low quality, and (iii) \emph{graceful degradation}: performance degrades smoothly as a function of the quality of the predictions.  The type of prediction and how we measure prediction error can vary vastly among different problem settings, as well as the type of performance that we are trying to optimize (e.g.\ runtime for warm starts, competitive ratio for online algorithms, memory usage for streaming algorithms).  Thus the algorithms and techniques that have been developed for algorithms with predictions are also diverse.  For an overview of algorithms with predictions, the reader is referred to the book chapter of Mitzenmacher and Vassilvitskii \cite{BWCA-AlgosWithPredictions}, with the note that the body of work in this area has grown significantly even in the few years since this was published.

\paragraph{Data-driven algorithms.} \emph{Data-driven algorithm design} is a beyond worst-case paradigm of algorithm design that is very related to algorithms with predictions.  
In this setting, we consider a \emph{parametrized family} of algorithms for a certain problem, and we wish to learn the best setting of parameters for instances drawn from a distribution \(\distribution\) (or to achieve low regret in the online case).  Work in this area typically focuses on proving learning guarantees, e.g.,\ how many samples needed to learn an approximately-optimal setting of parameters.  For an overview of data-driven algorithms, the reader is referred to the book chapter of Balcan \cite{BWCA-DataDrivenAlgorithms}.

We can think of this as a different viewpoint for algorithms with predictions.  For an algorithm with predictions \(\alg(I, P)\) where we can think of predictions as parameters, and each possible prediction \(P\) defines an algorithm in the family, i.e.\ the family of algorithms \(\{\alg_P(I) = \alg(I, P) \mid P \in \solutionspace\}\).  Work in algorithms with predictions typically focuses on showing that the performance of the algorithms in the family can be significantly better than the performance of a worst-case algorithm.  Ideally, we would like to have both types of guarantees: (i) that we can learn good predictions, and (ii) that good predictions enable us to achieve far better performance.  Our work attempts to bridge this gap by providing guarantee (i) to problems for which we have guarantee (ii).

Data-driven algorithms are often studied in online settings where one must learn a good parameter setting over time.  This is usually studied in the framework of regret, where an algorithm must compete with the best fixed parameter setting in hindsight.  There is a line of work that considers ``mixture" settings, in which the performance of the algorithm competes against multiple points in hindsight.  
\cite{SBD20} gives guarantees for  \emph{shifting regret}, where the algorithm must compete against offline strategies that can switch parameter settings some fixed number of times.    \cite{BKST21} studies \emph{meta-learning}, in which an algorithm sees samples from multiple distributions one at a time, and uses information from previous distributions to learn good initializations for future distributions.  These settings are different from ours, as the set of points that they compete with are considered sequentially, whereas we compete with a \(k\)-medians style objective.  \cite{KCBT24} uses a \emph{contextual bandit} framework to choose different parameter settings for different types of instances.  This is related to the setting that we study in \Cref{sec:k-wise-partitions}, in that we are learning information about the instance space, to provide parameter settings that are more fit to each instance.

\paragraph{Warm starts.}  \emph{Warm starts} are a popular heuristic that are used in practice to speed up the computation of various problems, e.g.\ for linear and mixed integer programs \cite{gurobi}.  Due to their success in practice, a line of work has sought to provide rigorous theoretical guarantees for problems such as bipartite matching \cite{DILMV21, CSVZ22} and max flow \cite{PZ22, DMVW23}.
Previous work has studied the problem of solving a sequence of instances using a learning-augmented algorithm to compete with the performance of the best fixed prediction in hindsight \cite{KBTV22}.  Our work is, to the best of our knowledge, the first to consider the problem of competing with multiple predictions in the warm start setting.

\paragraph{Dynamic algorithms.}  We note that the online setting of our problem, in which we solve a sequence of related instances, is related to the setting of \emph{dynamic algorithms}.  In dynamic algorithms, the input can change in small, structured ways from one day to the next (e.g.\ for a graph, one edge is inserted or deleted per day).  A dynamic algorithm must update its solution based on the change in the input, and seeks to minimize its \emph{update time}, or runtime to perform this update on a given day.  In dynamic algorithms, we typically hope to achieve update time that is sublinear, or even much smaller, in the size of the input.  This justifies that a dynamic algorithm is much more efficient than solving each day's instance from scratch.  

In the warm start setting on the other hand, the input instance can change arbitrarily from day to day.  Thus, we cannot hope for sublinear update times, as the algorithm must at least read the input and verify the predicted solution on each day.  However, as we show in this work, in this less structured setting, it is possible to compete against stronger baselines.  For example, consider a setting where on each day we receive an input corresponding to one of several slowly changing graphs.  This falls outside the standard dynamic model, but can be addressed by our guarantee that competes with multiple trajectories.  
There is also a line of work that considers \emph{dynamic algorithms with predictions} \cite{vdBFNP24, HSSY24, LS23, AB24}, but this setting is quite different than the one we consider in this work.

\paragraph{Multiple predictions for online algorithms.} There is a line of work that studies competing with multiple predictions for \emph{online algorithms with predictions}.  Online algorithms usually have some notion of commitment, where on each day the algorithm must make an irrevocable decision that affects future performance.  Typically, algorithms with predictions in this setting solicit information about future events.  Given multiple such predictions, the challenge is combining them into \emph{one} decision that can be made on a given time step.  Because the decisions that the algorithm makes vary from problem to problem, techniques to combine predictions often have to be problem specific.  Work in this area includes strategies for scheduling problems \cite{DILMV22}, set cover and facility location \cite{AGKP22}, and metric algorithms \cite{ACEPS23}.  This setting is significantly different from ours.  In our setting, we can asymptotically compete with multiple predictions by running multiple instantiations of the learning-augmented algorithm in parallel.  This leads to a different set of strategies and techniques than in the online setting.

\paragraph{Online search and server problems.}  The \emph{online ball search} problem that we formulate in this work has connections to other well-studied online problems.  
Previous work has approached this problem through the lens of \emph{online convex optimization} \cite{KBTV22}, to provide competitive guarantees against a single fixed point in hindsight.  
We note that in going from competing against one trajectory to competing against multiple trajectories, our problem becomes non-convex, and we require a different set of tools to approach it.  

For competing against \(k\)-trajectories, the closest connection is to the \(k\)-server problem \cite{MMS88, MMD90}, in which we must service a sequence of \emph{requests} in a metric space, using \(k\) \emph{servers}, while moving the servers as little as possible over the course of the algorithm.  In our setting, we can think predictions as servers, and the solutions of arriving instances as being requests, a connection that we explore in \Cref{subsec:reduction-to-k-server}.  While we can solve online ball search with a reduction to the \(k\)-server problem, it is not clear if they are equivalent problems, and another approach we propose for online ball search (\Cref{subsec:improved-runtime}) does not go through this connection.  

A related online search problem is the \emph{oil searching problem} \cite{MOP09}, in which an algorithm can put in work to search \(n\) locations to certain depths to try to find a resource.  This is related to our problem, in which the algorithm can put in work at various predictions to try to find a solution near that prediction.  However, our problem allows the algorithm more freedom in choosing the staring location, and has a different objective function.  Other related search problems are the \emph{cow path problem} \cite{KRT93}, and the problem of \emph{searching in the plane} \cite{BCR93}.  In these problems the algorithm is allowed to traverse the space in different ways than in online ball search.  

The objective of the online ball search problem is analogous to the \(k\)-medians objective function in clustering, in some sense.  Thus, the online problem also has connections to online and dynamic \emph{\(k\)-medians clustering} \cite{BCLP23}, and \emph{online steiner tree} \cite{IW91, AA92, GGK16} though the objectives of these problems are somewhat different from ours.  We note that the online steiner tree problem has also been studied in the learning-augmented setting \cite{XM21}, a setting which is quite different from this work.

\section{Preliminaries}
\label{sec:preliminaries}

We model an algorithm as solving instances from an instance space \(\instancespace\) to produce solutions from a solution space \(\solutionspace\).  That is, a (standard) algorithm \(\alg\) takes \(I \in \instancespace\) as input to produce \(\alg(I) = S \in \solutionspace\).  
To model a warm start algorithm, we assume a metric distance \(\md\) on the solution space \(\solutionspace\). 

\begin{definition}[Warm start algorithm]
    A \emph{warm start} algorithm \(\alg\) is one that takes a problem instance \(I \in \instancespace\) and a \emph{predicted solution} \(P \in \solutionspace\) as input.  \(\alg(I, P)\) outputs the true solution \(S \in \solutionspace\) of instance \(I\) in time \(\le \md(P, S)\).
\end{definition}

To justify modelling the runtime of an algorithm with predictions as a metric distance over the solution space \(\solutionspace\), consider the following examples.  For bipartite matching, \cite{DILMV21} give an algorithm with runtime \(\widetilde{O} \left(m \sqrt{n} \cdot (1 + ||y^* - \widehat{y}||_1) \right)\), where \(\widehat{y}\) is the predicted (dual) solution, and \(y^*\) is the optimal (dual) solution.  \cite{CSVZ22} improve the guarantee to \(O \left(m \sqrt{n} + (m + n \log n)||y^* - \widehat{y}||_0 \right)\).  Both of these runtimes can be interpreted as a metric distance that is essentially the distance between the true solution and the predicted solution in the relevant norm, scaled by a factor that depends on the input size.  Another example is the algorithm of \cite{DMVW23} which solves instances of max-flow in time \(O (|E| \cdot (1 + ||\widehat{f} - f^*||_1) )\), where \(|E|\) is the number of edges in the flow network, \(\widehat{f}\) is the predicted flow (as a vector), and \(f^*\) is the optimal flow closest to \(\widehat{f}\).

Note that we consider copies of the same solution to be distinct entities at some fixed distance from each other that depends on the input size.  This is because even when the predicted solution is exactly the true solution, a warm-start algorithm will still take some amount of time to verify the solution.  This is reflected in the time-bounds that are given above, which are lower bounded by a constant that depends on the input size. This is consistent with our characterization of the runtime as a metric on the solution space.  For example, in many settings, we can take the distance between a prediction \(P\) and a solution \(S\) to be 
\[\md(P, S) = |I| \cdot (1 + ||P - S||)\] 
where \(|I|\) is the size of the relevant instances, and \(||P - S||\) is the distance between \(P\) and \(S\) as vectors taken in the relevant norm.

In the remainder of the paper, we will assume that the runtime of a warm start algorithm corresponds to some metric distance \(\md\) that is known and easily computable.  In particular, we make the following assumption. 

\begin{assumption}[Metric distance is easily computable]\label{ass:metric-distance-computable}
    It is possible to compute the distance, \(\md(S_1, S_2)\), between any two points, \(S_1, S_2 \in \solutionspace\), in time at most \(\dmin\), where \(\dmin\) is the minimum distance in \(\solutionspace\).  Recall that the distance \(\solutionspace\) is lower bounded, as we consider two copies of the same point to be distinct entities at some fixed distance away from each other.  

    For ease of notation, we will consider the distance \(\md\) to be scaled by a factor such that \(\dmin = 1\) and the time to compute the distance between points is \(O(1)\).  This is without loss of generality, as we give multiplicative runtime guarantees in terms of \(\md\). 
\end{assumption}

This assumption is reasonable, because in our applications of interest, the points in \(\solutionspace\) are represented by vectors of dimension that scales with the input size, and the distance function is distance taken in an appropriate norm.  The runtime of the warm-start algorithms also has a multiplicative factor that scales with the input size, so the time to calculate the distance is on the same order of magnitude as the time to perform ``one iteration" of the warm-start algorithm.  

\section{Competing against \(k\) fixed points offline}
\label{sec:k-fixed-points}

As a warm up and motivating example, we consider a distribution \(\distribution\) over instance-solution pairs \(\instancespace \times \solutionspace\). We assume that there is a underlying ground-truth mapping from instances \(I \in \instancespace\) to solutions \(S \in \solutionspace\) that is uncovered by our algorithm-with-predictions \(\alg\).  Thus it is sufficient to assume a distribution only over \(\instancespace\), and we include the solution in the distribution for ease of notation.  In many places it will be useful to consider the marginal distribution of solutions. 

We show that, with access to i.i.d.\ samples from \(\distribution\), a simple strategy of running the predictions in parallel can compete with the best \(k\) fixed predictions for the distribution, with an \(O(k)\) approximation factor.  This is a generalization of the results in \cite{DILMV21} and \cite{DMVW23}, which show that it is possible to PAC learn the single best fixed prediction with respect to \(\distribution\).

Our main observation is that a warm start algorithm-with-predictions allows us to run multiple threads in parallel, allowing us to compete with the performance of the best thread.


\begin{lemma}[Using \(k\) predictions]
Given an algorithm-with-predictions \(\alg\) for instances in \(\instancespace\), and a set of \(k\) predictions \(\mathbf{P} = (P_1, \dots, P_k), P_i \in \mathbf{S}\), for an instance \(I\) with (unknown) true solution \(S\), we can solve \(I\) in time 
\[O(k \cdot \md(S, \mathbf{P}(S))),\]
where \(\mathbf{P}(S) = \argmin_{P_j \in \mathbf{P}} \md(S, P_j) \).
\label{lem:using-k-predictions-in-parallel}
\end{lemma}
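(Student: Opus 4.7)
The plan is to prove this by a standard parallel-simulation argument that exploits the structure of warm-start algorithms. Specifically, I would run $k$ copies of $\alg$ in parallel, one per prediction $P_i \in \mathbf{P}$, interleaved in round-robin fashion, and halt as soon as any one of them terminates and outputs a valid solution.

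First, I would set up the procedure formally: for $j = 1, \dots, k$, instantiate a computation thread $T_j$ running $\alg(I, P_j)$. Maintain a global round counter and, in each round, advance each of the $k$ threads by one unit of computation. After a thread produces a candidate output $S_j$, halt the whole procedure and return $S_j$. (Since $\alg$ is a warm-start algorithm, whenever it terminates on instance $I$, the output is exactly the true solution $S$, so there is no need for any additional verification step.)

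Next, I would bound the total runtime. Let $P^* = \mathbf{P}(S) = \argmin_{P_j \in \mathbf{P}} \md(S, P_j)$, and let $j^*$ be its index. By the definition of a warm-start algorithm, thread $T_{j^*}$ terminates after at most $\md(S, P^*)$ steps of its own internal computation. Under the round-robin scheme, after $t$ rounds of the outer loop every thread has executed $t$ internal steps, so thread $T_{j^*}$ certainly terminates by round $t = \lceil \md(S, P^*) \rceil$. The total work performed by the outer loop up to that point is $k \cdot t = O(k \cdot \md(S, P^*))$, which matches the claimed bound.

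The only thing to check carefully is that bookkeeping overhead (checking whether any thread has halted, switching between threads) does not blow up the runtime beyond $O(k \cdot \md(S, P^*))$. This is a constant-factor cost per round per thread, so it is absorbed into the $O(k)$ factor. I do not expect any real obstacle here; the lemma is essentially a statement about parallelization, and the only property of the warm-start primitive that is used is that its runtime is upper bounded by $\md(P_j, S)$ and that termination signals a correct answer. One implicit convention worth stating in the write-up is that we may pessimistically assume we know a priori the order in which threads finish is unknown, so running all $k$ threads concurrently (rather than guessing) is necessary, and a matching $\Omega(k)$ lower bound is therefore unavoidable in the worst case, consistent with \Cref{rem:lower-bound} referenced earlier.
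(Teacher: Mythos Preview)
Your proposal is correct and matches the paper's own proof essentially line for line: interleave the $k$ runs of $\alg(I,P_j)$ in round-robin fashion, halt when the first one finishes, and bound the total work by $k$ times the runtime of the thread seeded with $\mathbf{P}(S)$. The only cosmetic difference is that the paper packages the procedure as a small boxed algorithm before giving the one-paragraph analysis.
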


\begin{proof}
We run the algorithm-with-predictions in parallel with each of the \(k\) predictions, and output the solution of the thread that completes first (\Cref{alg:k-predictions-parallel}).  Let \(\mathbf{P}(S) = P_{j^*}\) be the prediction that minimizes \(\md(S, P_{j^*})\).  We know that the time that \Cref{alg:k-predictions-parallel} spends on thread \(j^*\), i.e. the runtime of \(\mathcal{A}(I, P_{j^*})\), is \(O(\md(S, P_{j^*}))\).  Since the threads are run in parallel at the same rate, this means that the time that \Cref{alg:k-predictions-parallel} spends on any thread \(j\) is \(O(\md(S, P_{j^*}))\), and therefore the total runtime of the algorithm is bounded by \(O(k \cdot \md(S, P_{j^*}))\).  
\end{proof}

\begin{algorithm}
\caption{Using \(k\) predictions in parallel}
\label{alg:k-predictions-parallel}
\begin{algorithmic}[1]
    \STATE Run \(\mathcal{A}(I, P_j)\) for all \(j \in [k]\) ``in parallel"\footnotemark \ until one of them completes
    \STATE Output the solution of the \(\mathcal{A}(I, P_j)\) that completed
\end{algorithmic}
\end{algorithm}
\footnotetext{Here, we use ``in parallel" to mean alternately running a constant number of steps of each of the \(\mathcal{A}(I, P_j)\), resulting in a sequential algorithm.}

We observe that it is possible to learn an approximately good set of predictions with respect to a distribution \(\distribution\) over \(\solutionspace\) from samples, by noting that this is the \(k\)-medians clustering problem for the distribution \(\distribution\).
We provide a standard sample compression argument that assumes little structure on \(\solutionspace\) for completeness, and note that better bounds are known for specific settings of interest. 

\begin{definition}[Clustering cost of a set]
    Let \(\mathbf{X} = \{X_1, \dots, X_m\}\) be a set of \(m\) points from \(\solutionspace\), and \(\centers \in \solutionspace^k\) be an arbitrary set of \(k\) centers.  We define the \emph{cost} of \(\centers\) over \(\mathbf{X}\) by
    \[\cost(\centers, \mathbf{X}) = \frac{1}{m} \sum_{i = 1}^m \left[ \md(\centers(X_i), X_i) \right],\]
    where \(\centers(X_i)\) is the closest center in \(\centers\) to \(X_i\).
\end{definition}

\begin{definition}[Clustering cost of a distribution]
    Let \(\distribution\) be a distribution over \(\instancespace \times \solutionspace\), and \(\centers \in \solutionspace^k\) be an arbitrary set of \(k\) centers.  We define the \emph{cost} of \(\centers\) over \(\distribution\) by
    \[\cost(\centers, \distribution) = \E_{(I, S) \sim \distribution} \left[ \md(\centers(S), S) \right],\]
    where \(\centers(S)\) is the closest center in \(\centers\) to \(S\).
    \label{def:clustering-opt}
\end{definition}


\begin{restatable}[Learning \(k\) fixed points]{lemma}{learningkfixedpoints}
    For a distribution \(\distribution\) over instance-solution pairs, it is possible to learn an \(O(1)\)-approximate \(k\)-medians clustering of \(\distribution\) from \(m \ge \frac{12 \cdot \dmax \cdot k \cdot \log(1/\delta)}{\cost(\centers^*, \distribution)} \) samples drawn i.i.d. from \(\distribution\), with probability \(\ge 1 - 2\delta\), where \(\dmax\) is the width of \(\solutionspace\).
    \label{lem:learning-k-fixed-points}
\end{restatable}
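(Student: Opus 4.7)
The plan is a standard sample-complexity argument for proper ERM learning of metric $k$-medians. I would draw $m$ samples $\mathbf{X} = \{X_1, \dots, X_m\}$ i.i.d.\ from $\distribution$ (using just the marginal on $\solutionspace$), then let $\widehat{\centers}$ be a constant-factor approximate solution to $k$-medians on $\mathbf{X}$ with centers drawn \emph{from $\mathbf{X}$}; any off-the-shelf $O(1)$-approximation for metric $k$-medians will do. The goal is to chain the three inequalities
\[\cost(\widehat{\centers}, \distribution) \le C_1 \cdot \cost(\widehat{\centers}, \mathbf{X}) \le C_2 \cdot \cost(\centers^*, \mathbf{X}) \le C_3 \cdot \cost(\centers^*, \distribution),\]
so that the final ratio $C_1 C_2 C_3$ is an absolute constant.

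The rightmost inequality follows from a multiplicative Chernoff bound applied to $\cost(\centers^*, \mathbf{X})$, viewed as the empirical mean of $m$ i.i.d.\ variables $\md(\centers^*(X_i), X_i) \in [0, \dmax]$ with expectation $\cost(\centers^*, \distribution)$. The standard form of the bound yields $\cost(\centers^*, \mathbf{X}) \le \tfrac{3}{2} \cdot \cost(\centers^*, \distribution)$ with probability $\ge 1-\delta$ once $m \ge \Omega(\dmax \log(1/\delta) / \cost(\centers^*, \distribution))$, which is implied by the lemma's hypothesis. The middle inequality combines two facts: (i) $\widehat{\centers}$ is $O(1)$-approximately optimal on $\mathbf{X}$ among all centers in $\mathbf{X}$, and (ii) the well-known observation that restricting $k$-medians centers to the sample loses at most a factor of $2$. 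For (ii), for each $c \in \centers^*$, let $x'(c)$ be the sample point nearest to $c$; then for any $x \in \mathbf{X}$ with $\centers^*(x) = c$, the triangle inequality gives $\md(x, x'(c)) \le \md(x, c) + \md(c, x'(c)) \le 2\md(x, c)$. Hence $C_2 = O(1)$ deterministically.

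The main obstacle is the leftmost inequality: because $\widehat{\centers}$ is selected as a function of $\mathbf{X}$, the loss $\md(\widehat{\centers}(S), S)$ is not a sum of i.i.d.\ variables when evaluated on $\mathbf{X}$, so Chernoff does not apply directly. My plan is to union-bound over all $\binom{m}{k} \le m^k$ candidate $k$-subsets of $\mathbf{X}$, applying a multiplicative Chernoff bound (in the reverse direction) for each fixed subset with per-event failure probability $\delta/m^k$. This gives a sample-complexity bound of the form $m \ge \Omega(\dmax (k \log m + \log(1/\delta)) / \cost(\centers^*, \distribution))$; the $\log m$ factor is absorbed into the constant by a standard self-bounding step, so that the lemma's $m \ge 12 \dmax k \log(1/\delta) / \cost(\centers^*, \distribution)$ suffices for $C_1 = O(1)$. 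A final union bound over the two random events (concentration of $\cost(\centers^*, \mathbf{X})$ and uniform concentration over the $\binom{m}{k}$ sample-center subsets) gives total failure probability at most $2\delta$, establishing the lemma. The most delicate spot is the self-bounding step that removes the $\log m$ overhead; if this is unwieldy, a slightly weaker but equivalent-in-spirit bound with a $\log m$ factor, or a pseudo-dimension argument under mild structural assumptions on $\solutionspace$, would serve as a backup.
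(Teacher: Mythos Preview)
Your proposal is correct and follows essentially the same three-step chain as the paper's proof: a Chernoff bound for $\cost(\centers^*,\mathbf{X})$ versus $\cost(\centers^*,\distribution)$, the standard factor-$2$ loss from restricting centers to the sample, and a uniform generalization bound over the $\binom{m}{k}$ candidate center sets. The only stylistic difference is that the paper packages your union-bound step as an invocation of the sample-compression theorem (Theorem~30.2 in Shalev-Shwartz--Ben-David), evaluating $\widehat{\centers}$ on $\mathbf{X}\setminus\widehat{\centers}$ so that the held-out points are conditionally i.i.d.; your direct union bound over index subsets is the same argument unrolled, and the paper is equally informal about absorbing the residual $\log m$ term.
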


\begin{proof}
    Let \(\centers^* = (\centers^*_{(1)}, \dots, \centers^*_{(k)})\) be a best set of \(k\) centers for distribution \(\distribution\).  Let \(\dmax\) be the width (largest distance) of the metric space \(\solutionspace\).  Consider \(\mathbf{X} = \{X_1, \dots, X_m\}\), a set of \(m\) samples drawn i.i.d.\ from \(\distribution\).  First, we bound the probability that the average loss of \(\centers^*\) over \(\mathbf{X}\) is far from the average loss of \(\centers^*\) over \(\distribution\).  Let 
    \[\cost(\centers^*, \distribution) = \E_{X \sim \distribution} \left[ \md(\centers^*(X), X) \right],\]
    where \(\centers^*(X)\) is the closest \(C \in \centers^*\) to \(X\).
    Similarly, define the empirical cost of \(\centers^*\) over \(\mathbf{X}\) as  
    \[\cost(\centers^*, \mathbf{X}) = \frac{1}{m} \sum_{i = 1}^m \md(\centers^*(X_i), X_i).\]
    Now we can use a Chernoff-Hoeffding bound to see that
    \begin{align*}
        \ProbOp \left[ \cost(\centers^*, X) \ge 2 \cdot \cost(\centers^*, \distribution) \right] 
        &= e^{-\frac{m \cdot \cost(\centers^*, \distribution)}{3 \dmax}} .
    \end{align*}
    Thus, by setting \(m \ge \log(1/\delta) \cdot \frac{3 \dmax}{\cost(\centers^*, \distribution)}\), for some \(0 < \delta < 1\) we can achieve that the empirical loss of \(\centers^*\) over the samples \(X\) is at most twice the true loss over the distribution, with probability \(\ge 1 - \delta\).

    Now, we show that it is possible to learn an approximately good clustering for \(\distribution\) from \(\mathbf{X}\).  Our learning algorithm proceeds as follows.  It considers all \(\binom{m}{k}\) subsets of \(\mathbf{X}\) as possible centers.  Of the possible centers, it chooses the \(\widehat{\centers}\) with the lowest empirical loss.  

    By a sample compression argument, see e.g.\ Theorem 30.2 in \cite{SB14}, we have that with probability at least \(1 - \delta\)
    \[\cost(\widehat{\centers}, \distribution) \le \cost(\widehat{\centers}, \mathbf{X} \setminus \widehat{\centers}) + \sqrt{\cost(\widehat{\centers}, \mathbf{X} \setminus \widehat{\centers}) \frac{4k \log(m/\delta)}{m}} + \frac{8k\log(m/\delta)}{m}.\]
    Setting \(m\) to be sufficiently larger than \(4k \log(1 /\delta) \dmax\) allows us to conclude that 
    \[\cost(\widehat{\centers}, \distribution) \le 2 \cdot \cost(\widehat{\centers}, \mathbf{X} \setminus \widehat{\centers}).\]

    Now, we note that there is a clustering of \(\mathbf{X}\) that uses points from \(\mathbf{X}\) as centers, that has cost at most 2 times that of \(\centers^*\) on \(\mathbf{X}\).  To see this, consider mapping each center in \(\centers^*\) to its nearest point in \(\mathbf{X}\), to get \(\centers'\), 
    \[\centers'_{(i)} = \argmin_{X \in \mathbf{X}} \md(\centers^*_{(i)}, X).\]
    We have that for each \(X \in \mathbf{X}\)
    \begin{align*}
        \md(\centers'(X), X) &\le \md(\centers^*(X), X) + \min_{X' \in \mathbf{X}} \md(\centers^*(X), X') \\
        &\le 2 \cdot \md(\centers^*(X), X).
    \end{align*}
    Since \(\widehat{\centers}\) is the cost minimizer over all sets of centers that are subsets of \(\mathbf{X}\), this means that \(\cost(\widehat{\centers}, \mathbf{X}) \le 2 \cdot \cost(\centers^*, \mathbf{X}))\). Furthermore, since \(2k \le m\), we have that \(\cost(\widehat{\centers}, \mathbf{X} \setminus \widehat{\centers}) \le 2 \cdot \cost(\widehat{\centers}, \mathbf{X})\).
    Finally, we get that if \(m \ge \frac{12 \cdot \dmax \cdot k \cdot \log(1/\delta)}{\cost(\centers^*, \distribution)} \), then with probability \(\ge 1 - 2\delta\)
    \begin{align*}
        \cost(\widehat{\centers}, \distribution) &\le 2 \cdot \cost(\widehat{\centers}, \mathbf{X} \setminus \widehat{\centers}) \\
        &\le 4 \cdot \cost (\widehat{\centers}, \mathbf{X} ) \\
        &\le 8 \cdot \cost (\centers^*, \mathbf{X}) \\
        &\le 16 \cdot \cost(\centers^*, \distribution),
    \end{align*}
    so \(\widehat{\centers}\) is an \(O(1)\)-approximation to the minimum cost \(k\)-clustering of \(\distribution\).
\end{proof}

Using the above lemmas, we can conclude the following theorem.

\begin{theorem}[Competing against \(k\) fixed points offline]
    With access to \(m \ge \frac{12 \cdot \dmax \cdot k \cdot \log(1/\delta)}{\cost(\centers^*, \distribution)} \) i.i.d. samples from \(\distribution\), it is possible to design an algorithm that has expected runtime 
    \[O(k) \cdot \cost(C^*, \distribution)\]
    on future instances drawn from \(\distribution\), where \(\centers^*\) is the set of \(k\) centers in \(\solutionspace\) that minimizes the cost with respect to \(\distribution\).
    \label{thm:k-fixed-points}
\end{theorem}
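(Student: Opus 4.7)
The proof is a direct combination of the two preceding lemmas, so the plan is to chain them together and take expectations. First, I would invoke \Cref{lem:learning-k-fixed-points} on the $m$ i.i.d.\ samples from $\distribution$ to produce a set $\widehat{\centers}$ of $k$ predictions that, with probability at least $1 - 2\delta$, satisfies $\cost(\widehat{\centers}, \distribution) \le 16 \cdot \cost(\centers^*, \distribution)$. The learned set $\widehat{\centers}$ is then used as the fixed prediction set for all subsequent instances.

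Second, I would describe the online procedure: given any new instance $(I, S) \sim \distribution$, feed $I$ together with the $k$ learned predictions $\widehat{\centers}$ into \Cref{alg:k-predictions-parallel} (running in parallel until the first thread finishes). By \Cref{lem:using-k-predictions-in-parallel}, this terminates in time $O\bigl(k \cdot \md(\widehat{\centers}(S), S)\bigr)$ where $\widehat{\centers}(S)$ is the closest prediction to $S$.

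Third, I would take expectation over $(I, S) \sim \distribution$, conditional on the learning step having succeeded:
\[
\E_{(I,S) \sim \distribution}\bigl[\mathrm{runtime}\bigr] \;=\; O(k) \cdot \E_{(I,S) \sim \distribution}\bigl[\md(\widehat{\centers}(S), S)\bigr] \;=\; O(k) \cdot \cost(\widehat{\centers}, \distribution) \;\le\; O(k) \cdot \cost(\centers^*, \distribution),
\]
using \Cref{def:clustering-opt} and the guarantee on $\widehat{\centers}$. Since there is no genuine obstacle here, the only real care is in handling the failure event of probability at most $2\delta$: on such samples, the runtime on any future instance is at most $O(k \cdot \dmax)$ because $\md$ is bounded by the width of $\solutionspace$, and picking $\delta$ polynomially small (which only multiplies the sample complexity by a logarithmic factor and is consistent with the assumed lower bound on $m$) makes this contribution absorbable into the stated $O(k) \cdot \cost(\centers^*, \distribution)$ bound. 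The whole argument is essentially the observation that \Cref{alg:k-predictions-parallel} converts any $k$-medians clustering of the solution distribution into an algorithmic runtime guarantee with an $O(k)$ overhead, and \Cref{lem:learning-k-fixed-points} supplies an approximately optimal such clustering from samples.
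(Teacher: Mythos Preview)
Your proposal is correct and follows essentially the same two-step argument as the paper: invoke \Cref{lem:learning-k-fixed-points} to obtain an $O(1)$-approximate set of centers $\widehat{\centers}$, then apply \Cref{lem:using-k-predictions-in-parallel} and take expectations. Your explicit handling of the $2\delta$ failure event is in fact slightly more careful than the paper's own proof, which simply states the bound conditional on the learning step succeeding.
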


\begin{proof}
    \Cref{lem:learning-k-fixed-points} tells us that from \(m \ge \frac{12 \cdot \dmax \cdot k \cdot \log(1/\delta)}{\cost(\centers^*, \distribution)} \) samples, with probability \(\ge 1 - 2 \delta\) we can learn a set of centers \(\widehat{\centers}\) such that 
    \[\cost(\widehat{\centers}, \distribution) = O(1) \cdot \cost(\centers^*, \distribution).\]
    Once we have \(\widehat{\centers}\), on a subsequent instance-solution pair \((I, S) \sim \distribution\), we can use the algorithm from \Cref{lem:using-k-predictions-in-parallel} to solve \(I\) in expected time 
    \begin{align*}
        \E_{(I, S) \sim \distribution} \left[ O(k \cdot \md(S, \widehat{\centers}(S))\right] &= O(k) \cdot \cost(\widehat{\centers}, \distribution) \\
        &= O(k) \cdot \cost(\centers^*, \distribution).
    \end{align*}
\end{proof}

\begin{remark}
    In the generality that we have modeled the problem in this section, we cannot hope to achieve an approximation factor that is \(o(k)\).  Consider the following bad example.  There are \(k\) planted solutions that are arbitrarily far apart.  Our distribution serves instances such that there is a uniform probability that any of these planted solutions is the true solution.  Thus, an algorithm must either explore a constant fraction of the \(k\) planted solutions in expectation, incurring a \(O(k)\)-approximation, or pay the large distance between the solutions, resulting in an unbounded approximation.  

    It is potentially possible to get around this lower bound by taking advantage of additional structure in the problem.  We explore one such approach in \Cref{sec:k-wise-partitions}.
    \label{rem:lower-bound}
\end{remark}

\section{Competing with a hypothesis class of \(k\)-wise partitions offline}
\label{sec:k-wise-partitions}

In the previous section, we showed that in the offline setting, it is possible to construct an algorithm that competes with the best \(k\)-wise clustering cost for the distribution, with a factor \(O(k)\) blow-up.  We also show that, without additional assumptions, this is the best approximation factor achievable for this model.  
In this section, we investigate ways to leverage extra information to remove the multiplicative \(O(k)\) factor.  In particular, to achieve better performance than the lower bound, we must have some additional way to extract information about the location of \(S\) from \(I\).  

A learning-augmented algorithm, or indeed any algorithm to solve instances \(I\), already encodes the mapping from instances to solutions.  However, it encodes the exact mapping, and the cost of uncovering the mapping is high.  It is reasonable that for many distributions of interest over \(\instancespace \times \solutionspace\), we can learn a \emph{coarse} mapping between instances and solutions, that can help us search for the solution faster.  

A natural way to model a coarse mapping between instances and solutions is as a \(k\)-wise partition over the instance space \(\instancespace\), i.e.\ \(h : \instancespace \rightarrow [k]\), where we expect that instances in a particular partition have solutions that are similar.
With this additional assumption of access to a partition \(h : \instancespace \rightarrow [k]\), we can learn the best fixed prediction for each partition of \(h\).  Once we have the \(k\) predictions fixed, for a new instance \(I\), we can evaluate \(h(I)\) and use the corresponding prediction to solve \(I\).  This avoids the factor \(k\) blow up in runtime that we had to incur in the previous section.  

We note that this is somewhat different from the approach in the previous section (\Cref{sec:k-fixed-points}).  In the previous section, we can think of the \(k\) predictions \(\centers\) that we choose as defining an implicit partition of the solution space, where each partition corresponds to a section in the Voronoi diagram of \(\solutionspace\) defined by \(\centers\).  Because the partition is over the solution space, given a fresh instance \(I\), it is not easy to see which partition \(I\)'s solution belongs to.   Thus we must run all of the predictions, and accrue an \(\Omega(k)\) approximation factor.  In this section, we are considering partitions \(h\) over \emph{instance} space, so that given a new instance \(I\), we can easily compute the partition that \(I\) belongs to.  This also implicitly defines subsets of the solution space \(\solutionspace\), where the \(i\)th (potentially overlapping) subset consists of solutions corresponding to instances in the \(i\)th partition of \(h\).  We note that these subsets may not have any particular structure.  

The question remains whether it is possible to learn a good partition \(h : \instancespace \rightarrow [k]\) for a distribution \(\distribution\).  We consider the setting where we must select a partition from a hypothesis class \(\hclass\) of potential efficiently-computable partitions of \(\instancespace\).  We aim to minimize the expected runtime on future instance-solution pairs from \(\distribution\).  We show that when the hypothesis class \(\hclass\) of \(k\)-wise partitions of \(\instancespace\) is learnable, 
it is indeed possible to learn an approximately optimal \(h \in \hclass\) and set of predictions, or ``centers," \(\centers \in \solutionspace^k\) for the partitions of \(h\).  

We do this by constructing a loss function, \(\centers\)-loss that can be used with an ERM oracle.  In \Cref{subsec:partitions-approximation}, we give the main result of this section, which is showing that the \(\centers\)-loss of a partition \(h : \instancespace \rightarrow [k]\) approximates the cost of \(h\) over a distribution \(\distribution\).  In \Cref{subsec:partitions-learning} we show that if the hypothesis class \(\hclass\) of partitions is learnable, then we can learn an approximately optimal hypothesis \(h \in \hclass\).  In \Cref{subsec:partitions-guarantee}, we conclude an algorithmic guarantee.  

\begin{definition}[Clustering cost of \(k\)-wise partition]
    Let \(\distribution\) be a distribution over \(\instancespace \times \solutionspace\), and \(h\) be a \(k\)-wise partition of \(\solutionspace\).  We define the \emph{cost} of \(h\) over \(\distribution\) by
    \[\cost(h, \distribution) = \E_{(I, S) \sim \distribution} \left[ \md(\centers_h^{(h(I))}, S) \right],\]
    where \(\centers_h\) is the collection of the best centers for each partition of \(h\), with respect to \(\distribution\).  That is, 
    \[\centers_h^{(i)} = \argmin_{C \in \solutionspace} \E_{(I, S) \sim \distribution} \left[ \md(C, S) \mid h(I) = i \right].\]
    \label{def:clustering-cost-of-partition}
\end{definition}

\begin{definition}[\(\centers\)-loss]
    Let \(\distribution\) be a distribution over \(\instancespace \times \solutionspace\), and \(h\) be a \(k\)-wise partition of \(\solutionspace\), and \(\centers \in \solutionspace^k\) be an arbitrary set of \(k\) centers.  We define the \emph{\(\centers\)-loss} of \(h\) on a point \((I, S) \in \instancespace \times \solutionspace\) as 
    \[\ell_\centers (h, (I, S)) = \md(S, \centers^{(h(I))}).\]
    We denote the expected \(\centers\)-loss of \(h\) over \(\distribution\) by
    \[\ell_{\centers}(h, \distribution) = \E_{(I, S) \sim \distribution} \left[ \md(S, \centers^{(h(I))}) \right].\]
    \label{def:c-loss}
\end{definition}

\begin{definition}[Rotation]
    For \(h \in \hclass\), where \(\hclass\) is a hypothesis class of \(k\)-wise partitions, we say that the \emph{rotation} of \(h\) by some \(\varphi : [k] \rightarrow [k]\) is \(\varphi \circ h\).  (Note that \(\varphi\) does not have to be bijective.)
\end{definition}

\begin{definition}[Rotational completion]
    We say that a hypothesis class \(\hclass\) of \(k\)-wise partitions is \emph{rotationally complete} if for every \(h \in \hclass\) and \(\varphi : [k] \rightarrow [k]\), 
    \[ \varphi \circ h \in \hclass.\]
    
    For any hypothesis class \(\hclass\) of \(k\)-wise partitions, we denote the \emph{rotational completion} of \(\hclass\) by \(\rc(\hclass)\), where we define
    \[\rc(\hclass) = \{\varphi \circ h \mid h \in \hclass, \varphi : [k] \rightarrow [k]\} .\]
\end{definition}

\subsection{Approximation}
\label{subsec:partitions-approximation}

The main technical component of this section is designing the \(\centers\)-loss (\Cref{def:c-loss}), that approximates the clustering cost of a \(k\)-wise partition, while also being decomposable and therefore usable with an ERM oracle.  

\begin{lemma}[\(\ell_{\centers}(h, \distribution)\) approximates clustering cost]
    Consider a distribution \(\distribution\) over pairs \((I, S) \in (\instancespace, \solutionspace)\), an arbitrary set of \(k\) centers \(\centers = (\centers^{(1)}, \dots, \centers^{(k)}) \in \solutionspace^k\), and a hypothesis class \(\hclass\) of \(k\)-wise partitions of \(\solutionspace\).  For every \(h \in \hclass\), there exists a rotation \(\varphi : [k] \rightarrow [k]\), such that 
    \[\ell_{\centers}(\varphi \circ h, \distribution) \le O(1) \cdot \left(\cost(h, \distribution) + \cost(\centers, \distribution)\right).\]
    \label{lem:ERM-approximation}
\end{lemma}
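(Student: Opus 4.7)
The plan is to construct the rotation $\varphi$ directly from the geometry of the centers $\centers_h$ (the optimal per-partition centers for $h$) and the given centers $\centers$, then apply the triangle inequality twice to connect $\ell_\centers(\varphi \circ h, \distribution)$ to the two cost quantities on the right-hand side. Specifically, I would define $\varphi : [k] \to [k]$ by
\[\varphi(i) = \argmin_{j \in [k]} \md\!\left(\centers_h^{(i)}, \centers^{(j)}\right),\]
i.e., $\varphi$ sends each cell of $h$ to the nearest center in $\centers$ to the optimal center for that cell under $\distribution$. Note $\varphi$ need not be injective, which is exactly why we allowed arbitrary maps $[k] \to [k]$ in the definition of rotation.

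With this choice, for any sample $(I,S) \sim \distribution$ with $h(I) = i$, the triangle inequality gives
\[\md\!\left(S, \centers^{(\varphi(i))}\right) \;\le\; \md\!\left(S, \centers_h^{(i)}\right) + \md\!\left(\centers_h^{(i)}, \centers^{(\varphi(i))}\right).\]
Taking expectation over $\distribution$, the first term integrates to exactly $\cost(h,\distribution)$ by \Cref{def:clustering-cost-of-partition}. So the work is in bounding the expectation of the second term by $O(1) \cdot (\cost(h,\distribution) + \cost(\centers,\distribution))$.

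For the second term, I would use the fact that $\varphi(i)$ is a minimizer, so in particular
\[\md\!\left(\centers_h^{(i)}, \centers^{(\varphi(i))}\right) \;=\; \min_{j} \md\!\left(\centers_h^{(i)}, \centers^{(j)}\right) \;\le\; \md\!\left(\centers_h^{(i)}, \centers(S)\right),\]
where $\centers(S)$ is the closest center in $\centers$ to $S$. Applying the triangle inequality again,
\[\md\!\left(\centers_h^{(i)}, \centers(S)\right) \;\le\; \md\!\left(\centers_h^{(i)}, S\right) + \md\!\left(S, \centers(S)\right).\]
Now taking expectation over $(I,S) \sim \distribution$, the first summand contributes another $\cost(h,\distribution)$ and the second contributes exactly $\cost(\centers,\distribution)$. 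Combining everything yields
\[\ell_{\centers}(\varphi \circ h, \distribution) \;\le\; 2\cdot\cost(h,\distribution) + \cost(\centers,\distribution),\]
which is the claimed $O(1)$ bound (in fact with explicit constant $2$).

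I do not anticipate a serious obstacle: the statement is essentially a two-step triangle inequality after a well-chosen matching of indices. The only subtlety worth flagging in the writeup is that $\varphi$ is allowed to be non-injective, so multiple cells of $h$ may collapse to the same center in $\centers$; this is necessary because $h$ may have cells whose optimal centers $\centers_h^{(i)}$ are all closest to the same element of $\centers$, and forcing a bijection would make the bound fail. The definition of rotational completion is crafted to allow exactly this.
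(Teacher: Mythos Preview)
Your proposal is correct and follows essentially the same approach as the paper: the same choice of $\varphi(i) = \argmin_j \md(\centers_h^{(i)}, \centers^{(j)})$, the same two applications of the triangle inequality routed through $\centers_h^{(i)}$ and $\centers(S)$, and the same final bound $\ell_{\centers}(\varphi \circ h, \distribution) \le 2\cdot\cost(h,\distribution) + \cost(\centers,\distribution)$. The only cosmetic difference is that the paper conditions on $h(I)=i$ and then takes a weighted combination over $i$, whereas you work pointwise and take expectations at the end; the arguments are otherwise identical.
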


\begin{proof}
    First, we consider the cost associated with a particular partition \(i\) of \(h\).  We show how to choose a center \(j\) from \(\centers\) to assign this partition.  Let \(\centers^{(i)}_h\) be the best possible center in \(\solutionspace\) for partition \(i\) of \(h\) for \(\distribution\).  That is, 
    \[\centers^{(i)}_h = \argmin_{C \in \solutionspace} \E_{(I, S) \sim \distribution} \left[ \md(S, C) \mid h(I) = i \right].\]
    Let \(\centers(S)\) be the best center from \(\centers\) for a solution \(S \in \solutionspace\).  That is, 
    \[\centers(S) = \argmin_{\centers^{(j)}} \md(S, \centers^{(j)}).\]
    Now, we can associate \(\centers^{(i)}_h\) with its closest center in \(\centers\).  Formally, let 
    \[\varphi(i) = \argmin_{j \in [k]} \md(\centers^{(j)}, \centers^{(i)}_h) .\]  
    Then we have
    \begin{align*}
        &\E_{(I, S) \sim \distribution} [ \md(S, \centers^{(\varphi(i))}) \mid h(I) = i ] \\
        \le& \E_{(I, S) \sim \distribution} [ \md(S, \centers^{(i)}_h) \mid h(I) = i ] + \md(\centers^{(\varphi(i))}, \centers^{(i)}_h) &\text{triangle inequality} \\ 
        \le& \E_{(I, S) \sim \distribution} [ \md(S, \centers^{(i)}_h) \mid h(I) = i ] + \E_{(I, S) \sim \distribution} [ \md(\centers(S), \centers^{(i)}_h) \mid h(I) = i ] \\
        \le& \E_{(I, S) \sim \distribution} [ \md(S, \centers^{(i)}_h) \mid h(I) = i ] + \E_{(I, S) \sim \distribution} [ \md(S, \centers^{(i)}_h) + \md(S, \centers(S)) \mid h(I) = i ] &\text{triangle inequality}\\
        =& 2 \cdot \E_{(I, S) \sim \distribution} [ \md(S, \centers^{(i)}_h) \mid h(I) = i ] + \E_{(I, S) \sim \distribution} [ \md(S, \centers(S)) \mid h(I) = i ]
    \end{align*}
    Taking the appropriate combination over the partitions \(i\), we get 
    \begin{align*}
        & \ell_{\centers}(\varphi \circ h, \distribution) \\
        =&\E_{(I, S) \sim \distribution} [\md(S, \centers^{(\varphi(h(I)))})] \\
        =& \sum_{i \in [k]} \Pr_{(I, S) \sim \distribution}[h(I) = i] \cdot \E_{(I, S) \sim \distribution} [ \md(S, \centers^{(\varphi(i))})) \mid h(I) = i ]\\
        \le& \sum_{i \in [k]} \Pr_{(I, S) \sim \distribution}[h(I) = i] \cdot \left( 2 \cdot \E_{(I, S) \sim \distribution} [ \md(S, \centers^{(i)}_h) \mid h(I) = i ] + \E_{(I, S) \sim \distribution} [ \md(S, \centers(S)) \mid h(I) = i ] \right) \\
        =& 2 \cdot \E_{(I, S) \sim \distribution} [ \md(S, \centers^{(h(I))}_{h})] +  \E_{(I, S) \sim \distribution} [ \md(S, \centers(S)) ] \\
        =& 2 \cdot \cost(h, \distribution) + \cost(\centers, \distribution).
    \end{align*}
\end{proof}

\subsection{Learning the hypothesis class}
\label{subsec:partitions-learning}

In this section, we show a series of lemmas that imply that learnability of the hypothesis class \(\hclass\) is enough to imply that the minimum loss \(h \in \rc(\hclass)\) is learnable.  The proofs of these lemmas are largely straightforward sample complexity arguments, which we include for completeness.

To characterize the learnability of hypothesis classes, we find it simplest to go through \(\psib\)-dimension, as defined by \cite{BCL92}.

\begin{restatable}[\(\psib\)-dimension of \(\rc(\hclass)\)]{lemma}{psibdimensionofrchclass}
    For a hypothesis class \(\hclass\) of \(k\)-wise partitions over \(\instancespace\), the \(\psib\)-dimension of \(\rc(\hclass)\) can only be an order \(k\) factor (up to logarithmic factors) larger than the \(\psib\)-dimension of \(\hclass\).  That is, 
    \[\psib(\rc(\hclass)) \in O\bigg( k \psib(\hclass) \log (k \psib (\hclass)) \bigg).\]
    \label{lem:psi-b-dimension-of-rc-hclass}
\end{restatable}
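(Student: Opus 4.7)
The plan is a Sauer-Shelah style counting argument that bounds the growth function of $\rc(\hclass)$ in terms of that of $\hclass$ at a cost of only a $k^k$ factor. Let $d = \psib(\hclass)$ and let $\Pi_{\mathcal{F}}(m)$ denote the maximum number of distinct $[k]$-valued labelings that a family $\mathcal{F}$ of $k$-wise partitions of $\instancespace$ can induce on a set of $m$ points. The key observation is that every $g \in \rc(\hclass)$ has the form $\varphi \circ h$ for some $h \in \hclass$ and $\varphi : [k] \to [k]$, and the labeling $g$ induces on a fixed $m$-point set depends only on the labeling induced by $h$ together with the rotation $\varphi$. Since there are at most $k^k$ choices of $\varphi$, I obtain $\Pi_{\rc(\hclass)}(m) \le k^k \cdot \Pi_{\hclass}(m)$.

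I then invoke the multiclass Sauer-Shelah-type bound from \cite{BCL92} for $\psi_B$-dimension, which gives $\Pi_{\hclass}(m) \le (Cmk)^d$ for some absolute constant $C$ (any polynomial-in-$(m,k)$ bound with exponent $d$ would suffice for the final computation). On the other hand, if $\rc(\hclass)$ $\psi_B$-shatters a set of size $m$, then by definition there exist binary projections $\pi_i : [k] \to \{0,1\}$ for $i \in [m]$ such that every bit vector in $\{0,1\}^m$ is realized as $(\pi_i(g(x_i)))_{i \in [m]}$ for some $g \in \rc(\hclass)$. Since the $\pi_i$ are fixed and distinct bit vectors must therefore come from distinct underlying $[k]$-valued labelings, this forces $\Pi_{\rc(\hclass)}(m) \ge 2^m$.

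Chaining the three bounds gives $2^m \le k^k \cdot (Cmk)^d$. Taking logarithms yields $m \le k \log_2 k + d \log_2(Cmk)$, and solving this self-referential inequality for $m$ gives $m = O\bigl(k \log k + d \log(kd)\bigr) = O\bigl(kd \log(kd)\bigr)$ (using $d \ge 1$), which is exactly the claimed bound.

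The main obstacle will be invoking the correct form of the multiclass Sauer-Shelah lemma for $\psi_B$-dimension: although any polynomial-in-$(m,k)$ growth-function bound with exponent $d$ suffices for the final scaling, one must locate or rederive the precise statement from \cite{BCL92} and verify that its polynomial factor is mild enough to preserve the $k \log k$ term. A smaller technical point is justifying that $\psi_B$-shattering (defined via binary projections) actually implies $\Pi_{\rc(\hclass)}(m) \ge 2^m$; this is immediate from the fact that distinct projection vectors cannot originate from the same $[k]$-valued labeling, but deserves an explicit sentence in the final write-up.
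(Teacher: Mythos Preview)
Your proposal is correct and follows essentially the same counting argument as the paper: both derive the inequality $2^m \le k^k \cdot (\text{polynomial in } m)^{d}$ and solve for $m$. The only minor difference is tactical: the paper first pigeonholes to fix a single rotation $\varphi^*$, obtaining $2^m/k^k$ distinct \emph{binary} labelings via the fixed projections $\psi_i \circ \varphi^*$, and then applies the ordinary (binary) Sauer--Shelah lemma; you instead bound the $[k]$-valued growth function directly and invoke the multiclass Sauer--Shelah bound from \cite{BCL92}. The paper's route is slightly more self-contained (it needs only binary Sauer--Shelah), while yours is a touch more streamlined; both yield the same final bound.
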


\begin{proof}
    Suppose that \(\rc(\hclass)\) can \(\psib\) shatter \(n\) points \(x_1, \dots, x_n \in \instancespace\).  This means that there exist functions \(\psi_1, \dots, \psi_n : [k] \rightarrow \{0, 1\}\) such that for every labeling \(y \in \{0, 1\}^n\), there exists an \(h' \in \rc(\hclass)\) such that 
    \[\psi_i (h' (x_i)) = y_i, \qquad \forall i.\]
    Let \(S\) be a set containing one \(h'\) achieving each labeling \(y\).  Thus, \(|S| = 2^n\). Associate each \(h' \in S\) with a choice of \(\varphi \) and \(h \in \hclass\) such that \(h' = \varphi \circ h\).  

    There are at most \(k^k\) possible values of \(\varphi\).  Thus, there must be some \(\varphi^* : [k] \rightarrow [k]\) such that at least \(2^n / k^k\) elements of \(S\) are associated with \(\varphi^*\).  Let \(S'\) be the subset of \(S\) containing \(h'\) that are associated with \(\varphi^*\).  We have that each \(h' \in S'\) maps to a distinct value of 
    \[(\psi_1 (h' (x_1)), \dots, \psi_n (h' (x_n))) = ( (\psi_1 \circ \varphi^*) (h (x_1)), \dots, (\psi_n \circ \varphi^*) (h (x_n))).\]
    Let \(\hclass'\) be the set of \(h\) that are associated with \(h' \in S'\).  The above tells us that for functions \((\psi_1 \circ \varphi^*), \dots, (\psi_n \circ \varphi^*)\), the hypothesis classes in \(\hclass'\) span at least \(2^n / k^k = n^{\frac{n - k \log k}{\log n}}\) labelings of \(x_1, \dots, x_n\).  

    By the Sauer-Shelah lemma, this means that there must be a subset \(X' \subseteq (x_1, \dots, x_n)\) of size \(\Omega((n - k \log k) / \log n)\) that is \(\psib\)-shattered by \(S'\) with respect to the functions \((\psi_1 \circ \varphi), \dots, (\psi_n \circ \varphi)\).  

    Let \(q\) be the \(\psib\)-dimension of \(\hclass\).  
    Since \(q \in \Omega(\Omega((n - k \log k) / \log n))\), we have that \(n \in O(kq \log kq)\).
    Since this applies when \(n\) is \(\psib\)-dimension of \(\rc(\hclass)\), we have that 
    \[\psib(\rc(\hclass)) \in O\bigg( k \psib(\hclass) \log (k \psib (\hclass)) \bigg).\]
\end{proof}

A simple reduction can construct an ERM oracle for \(\rc(\hclass)\) using calls to an ERM oracle for \(\hclass\), where the number of calls depends only on \(k\).  However, we note that it may be possible to get a much more efficient ERM oracle, for example when \(\hclass\) is rotationally complete to begin with.   

\begin{restatable}[Converting ERM for \(\hclass\) to ERM for \(\rc(\hclass)\)]{lemma}{convertingermforhclasstoermforrchclass}
    For a hypothesis class \(\hclass\) of \(k\)-wise partitions of \(\instancespace\) and an empirical distribution \(\distribution\) over \(\instancespace \times \solutionspace\), given access to an oracle \(\oracle\) that can answer queries of the form 
    \[\argmin_{h \in \hclass} \ell_\centers(h, \distribution)\]
    for any set of centers \(\centers \in \solutionspace^k\), 
    we can construct an oracle \(\oracle_\rc\) that can answer queries of the form 
    \[\argmin_{h' \in \rc(\hclass)} \ell_\centers(h', \distribution)\]
    for any set of centers \(\centers \in \solutionspace^k\) using \(k^k\) calls to \(\oracle\).
\end{restatable}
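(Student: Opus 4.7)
The plan is to enumerate all $k^k$ candidate rotations $\varphi : [k] \to [k]$, make one call to $\oracle$ per $\varphi$, and return the best resulting $\varphi \circ h$. The key observation driving this is a reparametrization: for any $\centers = (\centers^{(1)}, \dots, \centers^{(k)}) \in \solutionspace^k$ and any $\varphi : [k] \to [k]$, if we let $\centers_\varphi$ denote the $k$-tuple defined by $\centers_\varphi^{(i)} := \centers^{(\varphi(i))}$, then for every $h \in \hclass$,
\[
\ell_\centers(\varphi \circ h, \distribution) \;=\; \E_{(I,S)\sim \distribution}\!\bigl[\md(S, \centers^{(\varphi(h(I)))})\bigr] \;=\; \E_{(I,S)\sim \distribution}\!\bigl[\md(S, \centers_\varphi^{(h(I))})\bigr] \;=\; \ell_{\centers_\varphi}(h, \distribution).
\]
Thus minimizing the $\centers$-loss of $\varphi \circ h$ over $h \in \hclass$ is exactly the same as minimizing the $\centers_\varphi$-loss of $h$ over $\hclass$, which is a single query that the oracle $\oracle$ can answer.

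Concretely, the oracle $\oracle_\rc$ on input $\centers$ will proceed as follows. For each of the $k^k$ functions $\varphi : [k] \to [k]$, build $\centers_\varphi$ in $O(k)$ time, invoke $h_\varphi \gets \oracle(\centers_\varphi)$, and record the candidate $h'_\varphi := \varphi \circ h_\varphi$ together with its empirical loss $\ell_\centers(h'_\varphi, \distribution)$ (computable on the empirical distribution). After the loop, return whichever $h'_\varphi$ attains the smallest recorded loss.

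Correctness follows by a direct covering argument. Every $h^\star \in \rc(\hclass)$ can, by definition, be written as $h^\star = \varphi^\star \circ h$ for some $\varphi^\star : [k] \to [k]$ and some $h \in \hclass$. When our loop reaches $\varphi = \varphi^\star$, the oracle returns $h_{\varphi^\star} \in \argmin_{h \in \hclass} \ell_{\centers_{\varphi^\star}}(h, \distribution) = \argmin_{h \in \hclass} \ell_\centers(\varphi^\star \circ h, \distribution)$, so in particular $\ell_\centers(\varphi^\star \circ h_{\varphi^\star}, \distribution) \le \ell_\centers(h^\star, \distribution)$. Taking the minimum over $\varphi$ can only improve this further, so the returned candidate attains the global minimum over $\rc(\hclass)$. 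The total oracle cost is $k^k$ calls, matching the statement, and the additional overhead (constructing each $\centers_\varphi$ and evaluating its empirical loss) is polynomial in $k$ and the sample size.

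There is no real technical obstacle here; the entire content is the reparametrization identity above, after which brute force over the finite set of rotations suffices. The one place to be a little careful is that $\varphi$ is allowed to be non-injective (as noted in the definition of rotation), which is exactly why the count is $k^k$ rather than $k!$, and why $\centers_\varphi$ is obtained by arbitrary repetition of coordinates of $\centers$ rather than a permutation.
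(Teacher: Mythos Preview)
Your proposal is correct and follows essentially the same approach as the paper: both rely on the reparametrization identity $\ell_\centers(\varphi \circ h, \distribution) = \ell_{\centers_\varphi}(h, \distribution)$ and then brute-force over the $k^k$ choices of $\varphi$, making one oracle call each and returning the minimizer. Your write-up is in fact slightly more careful than the paper's, spelling out the correctness argument and the non-injectivity point explicitly.
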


\begin{proof}
    There are exactly \(k^k\) possible choices for \(\varphi: [k] \rightarrow [k]\).  Define 
    \[\varphi(\centers) = (\centers^{(\varphi(1))}, \dots, \centers^{(\varphi(k)})).\]
    Then we have that 
    \[\ell_\centers(\varphi \circ h, \distribution) = \ell_{\varphi(\centers)}(h, \distribution).\]
    Therefore, 
    \begin{align*}
        \argmin_{h' \in \rc(\hclass)} \ell_{\centers}(h', \distribution) &= \min_{\varphi: [k] \rightarrow [k]} \ell_\centers \left( \varphi \circ \left(\argmin_{h \in \hclass} \ell_{\centers}(\varphi \circ h, \distribution) \right), \distribution \right) \\
        &= \min_{\varphi: [k] \rightarrow [k]} \ell_\centers \left( \varphi \circ \left(\argmin_{h \in \hclass} \ell_{\varphi(\centers)}(h, \distribution)  \right) , \distribution \right).
    \end{align*}  
    Thus, we can evaluate this by enumerating over all choices of \(\varphi\), and making one call to \(\oracle\) for each \(\varphi\).  Then, we can evaluate the empirical loss of \(\varphi\) with the minimizing \(h \in \hclass\), and choose the \(\varphi\) that achieves the lowest loss.  
\end{proof}

\begin{restatable}[Pseudo-dimension of loss functions]{lemma}{pseudodimensionoflossfunctions}
    For a fixed \(\centers\), and hypothesis class \(\hclass\) of \(k\)-wise partitions over \(\instancespace\), the pseudo-dimension of the set of loss functions \(\ell_\centers(h, \cdot )\) for \(h \in \hclass\) is bounded by the \(\psib\)-dimension of \(\hclass\).
    \label{lem:pseudo-dimension-of-loss-functions}
\end{restatable}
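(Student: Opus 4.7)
The plan is to give a direct reduction that turns any pseudo-shattering of points by the loss functions $\{\ell_\centers(h,\cdot) : h \in \hclass\}$ into a $\psib$-shattering of the associated $\instancespace$-coordinates by $\hclass$. Since $\centers$ is fixed throughout, each $\psi_i : [k]\to\{0,1\}$ we need is uniquely determined by the shattered point and its pseudo-shattering threshold, and the reduction requires no blowup.

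More concretely, suppose that the pseudo-dimension of $\{\ell_\centers(h,\cdot) : h \in \hclass\}$ is $n$. Then there exist points $(I_1, S_1), \dots, (I_n, S_n) \in \instancespace \times \solutionspace$ and real thresholds $r_1, \dots, r_n \in \R$ such that for every labeling $y \in \{0,1\}^n$ there is some $h \in \hclass$ with
\[
\mathbbm{1}\bigl[\md(S_i, \centers^{(h(I_i))}) > r_i\bigr] = y_i \qquad \text{for all } i \in [n].
\]
I would then define, for each $i \in [n]$, the function $\psi_i : [k] \to \{0,1\}$ by
\[
\psi_i(j) \;=\; \mathbbm{1}\bigl[\md(S_i, \centers^{(j)}) > r_i\bigr].
\]
Note that $\psi_i$ is well-defined because $\centers$, $S_i$, and $r_i$ are all fixed; it only depends on the index $j \in [k]$ output by a hypothesis $h$ on $I_i$.

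With this choice, $\psi_i(h(I_i)) = \mathbbm{1}[\md(S_i, \centers^{(h(I_i))}) > r_i]$, so pseudo-shattering of $(I_i, S_i)$ by the loss functions is literally the same set of equalities as $\psib$-shattering of $I_1, \dots, I_n \in \instancespace$ by $\hclass$ via the witnesses $\psi_1, \dots, \psi_n$. Hence $\hclass$ $\psib$-shatters a set of size $n$, giving $\psib\text{-dim}(\hclass) \ge n$, which is exactly the claim. There is no real obstacle here: the only thing to verify is that $\psib$-dimension allows the witness functions $\psi_i$ to depend on the point being shattered, which matches the definition from \cite{BCL92} already used earlier in the section; since $S_i$ and $r_i$ can vary across $i$, so can $\psi_i$, and the reduction goes through cleanly.
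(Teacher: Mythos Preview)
Your proposal is correct and follows essentially the same argument as the paper: both start from a pseudo-shattered set, define each witness $\psi_i$ by thresholding $j \mapsto \md(S_i,\centers^{(j)})$ at the pseudo-shattering threshold, and observe that this turns pseudo-shattering by the loss class into $\psib$-shattering of the $\instancespace$-coordinates by $\hclass$. If anything, your write-up is slightly cleaner in making explicit that the shattered points are pairs $(I_i,S_i)$ and that only the $I_i$ are fed to $h$.
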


\begin{proof}
    Let \(d\) be the pseudo-dimension of the set of \(\ell_\centers(h, \cdot)\) for \(h \in \hclass\).  This means that there exist points \(x_1, \dots, x_d \in \instancespace\) and thresholds \(t_1, \dots, t_d \in \mathbb{R}\) such that for every labeling \(y \in \{0, 1\}^n\), there exists an \(h \in \hclass\) such that for all \(i \in [d]\)
    \[ \begin{cases}
        \ell_\centers(h, x_i) \le t_i & \text{if } y_i = 0\\
        \ell_\centers(h, x_i) > t_i & \text{if } y_i = 1
    \end{cases} .\]

    For each \(i\), we construct a function \(\psi_i : [k] \rightarrow \{0, 1\}\) via 
    \[\psi_i(z) = \begin{cases}
        0 & \text{if } \md(\centers^{(i)} , x_i) \le t_i \\
        1 & \text{if } \md(\centers^{(i)}, x_i) > t_i
    \end{cases}.\]
    This ensures that for a hypothesis \(h \in \hclass\), \(\psi_i (h(x_i)) = 0\) if and only if \(\ell_\centers(h, x_i) \le t_i\).  Thus, \(\psi_1, \dots, \psi_d\) witness the \(\psib\)-shattering of \(x_1, \dots, x_d\), and the pseudo-dimension of the set of \(\ell_\centers(h, \cdot)\) for \(h \in \hclass\) is at most the \(\psib\)-dimension of \(\hclass\).  
\end{proof}

\begin{lemma}[Learning guarantee]
    Given an arbitrary set of centers \(\centers\) over a bounded space \(\solutionspace\), a learnable (finite \(\psib\)-dimension) rotationally complete hypothesis class \(\hclass\) of \(k\)-wise partitions of \(\instancespace\), an ERM oracle \(\oracle\) that can compute \(\argmin_{h \in \hclass} \ell_\centers(h, \distribution')\) for empirical distributions \(\distribution'\), and sample access to a distribution \(\distribution\) over \(\instancespace \times \solutionspace\), we can learn an \(h \in \hclass\) and a set of centers \(\centers_h \in \solutionspace^k\) such that 
    \[\E_{(I, S) \sim \distribution} \left[ \md(\centers_h^{h(I)}, S) \right] \le O(1) \left( 1 + \cost(\centers, \distribution) + \min_{h' \in \hclass} \cost(h', \distribution) \right).\]
    In particular, if \(d\) is the \(\psib\)-dimension of \(\hclass\), and \(\dmax\) is the largest distance in \(\solutionspace\), with probability \(1 - \delta\) we can learn the above in 
    \[O(\dmax^2 (d + \ln \frac{1}{\delta}))\]
    samples, where \(\dmax\) is the largest distance in \(\solutionspace\).
    \label{lem:learning-guarantee-for-k-wise-partitions}
\end{lemma}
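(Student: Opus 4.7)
The plan is a standard empirical risk minimization (ERM) argument applied to the $\centers$-loss class. Since $\ell_\centers(h, (I,S)) = \md(S, \centers^{(h(I))})$ decomposes additively over examples, the ERM oracle $\oracle$ applied to an empirical distribution yields an approximate minimizer of the population $\centers$-loss via uniform convergence. The algorithm will output the ERM hypothesis $\widehat h$ together with $\centers_h := \centers$ itself, so that the target quantity $\E_{(I,S) \sim \distribution}[\md(\centers_h^{(\widehat h(I))}, S)]$ in the statement is exactly $\ell_\centers(\widehat h, \distribution)$.

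Concretely, I would draw $m$ i.i.d.\ samples from $\distribution$ to form the empirical distribution $\widehat{\distribution}$, then call $\oracle$ to obtain $\widehat h = \argmin_{h \in \hclass} \ell_\centers(h, \widehat{\distribution})$. By \Cref{lem:pseudo-dimension-of-loss-functions} the pseudo-dimension of $\{\ell_\centers(h, \cdot) : h \in \hclass\}$ is at most $d$, and each loss value lies in $[0, \dmax]$. A standard uniform convergence bound for bounded real-valued function classes of finite pseudo-dimension then guarantees that, with probability $\ge 1 - \delta$,
\[\sup_{h \in \hclass} \bigl| \ell_\centers(h, \widehat{\distribution}) - \ell_\centers(h, \distribution) \bigr| \le \epsilon\]
whenever $m \ge C \dmax^2 (d + \ln(1/\delta))/\epsilon^2$ for some absolute constant $C$. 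Taking $\epsilon$ to be a small absolute constant recovers the sample complexity stated in the lemma.

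To convert this uniform convergence into the approximation guarantee, let $h^* = \argmin_{h' \in \hclass} \cost(h', \distribution)$. By \Cref{lem:ERM-approximation} there is a rotation $\varphi$ for which $\ell_\centers(\varphi \circ h^*, \distribution) \le O(1)\bigl(\cost(h^*, \distribution) + \cost(\centers, \distribution)\bigr)$, and rotational completeness of $\hclass$ ensures $\varphi \circ h^* \in \hclass$, so that $\widehat h$ must compete with it on $\widehat{\distribution}$. Chaining two applications of uniform convergence around the empirical optimality of $\widehat h$ gives
\[\ell_\centers(\widehat h, \distribution) \le \ell_\centers(\widehat h, \widehat{\distribution}) + \epsilon \le \ell_\centers(\varphi \circ h^*, \widehat{\distribution}) + \epsilon \le \ell_\centers(\varphi \circ h^*, \distribution) + 2\epsilon,\]
and substituting the \Cref{lem:ERM-approximation} bound with $\epsilon = O(1)$ yields exactly the claimed $O(1)\bigl(1 + \cost(\centers, \distribution) + \cost(h^*, \distribution)\bigr)$ guarantee. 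The only mild subtlety is that because $\ell_\centers$ is real-valued we need pseudo-dimension rather than VC-dimension, but this reduction is already absorbed by \Cref{lem:pseudo-dimension-of-loss-functions}; beyond that the argument is routine textbook ERM.
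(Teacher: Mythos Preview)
Your proposal is correct and follows essentially the same ERM-plus-uniform-convergence scheme as the paper, but with one organizational difference that is worth noting. You apply \Cref{lem:ERM-approximation} to the \emph{true} distribution $\distribution$, obtaining a comparator $\varphi\circ h^*\in\hclass$ whose population $\centers$-loss is already controlled in terms of $\cost(h^*,\distribution)$ and $\cost(\centers,\distribution)$; you then sandwich with two uses of uniform convergence over $\hclass$. The paper instead applies \Cref{lem:ERM-approximation} to the \emph{empirical} distribution $\distribution_m$, which bounds $\ell_\centers(\widehat h,\distribution_m)$ by $O(1)\bigl(\cost(h^*,\distribution_m)+\cost(\centers,\distribution_m)\bigr)$ and therefore requires two separate Chernoff--Hoeffding arguments to transfer $\cost(h^*,\distribution_m)$ and $\cost(\centers,\distribution_m)$ back to their population counterparts. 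Your route is cleaner---it avoids those extra concentration steps entirely because the right-hand side is already in population terms---while the paper's route makes slightly more explicit that the empirical minimizer beats the empirical cost of every competitor. Both arrive at the same bound with the same sample complexity; your choice of $\centers_h=\centers$ matches what the paper implicitly does as well.
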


\begin{proof}
    Since \(\hclass\) has \(\psib\) dimension \(d\), by \Cref{lem:pseudo-dimension-of-loss-functions} the set of loss functions \(\mathcal{L} = \{\ell_\centers(h, \cdot) \mid h \in \hclass\}\) has pseudo-dimension at most \(d\).  

    By uniform convergence bounds for \(\psib\)-dimension \cite{BCL92}, we have that for any \(\delta \in (0, 1)\), any distribution \(\distribution\) over \(\instancespace \times \solutionspace\), since all of the loss functions in \(\mathcal{L}\) have value bounded in \([0, D]\), \(m = O(\dmax^2 (d + \ln \frac{1}{\delta}))\) samples are sufficient to ensure that with probability \(1 - \frac{\delta}{4}\) over the draw of \(\{(I, S)_1, \dots, (I, S)_m \} \sim \distribution^m\), for all \(h \in \hclass\), the difference between the average empirical loss over the samples and the expected loss over \(\distribution\) is at most a constant, i.e.: 
    \begin{equation} 
    \left| \frac{1}{m} \sum_{j = 1}^m \ell_\centers(h, (I, S)_j) - \ell_\centers(h, \distribution) \right| \le 1.
    \label{eq:uniform-convergence}
    \end{equation}

    Let \(\distribution_m\) be the empirical distribution of \(m\) samples drawn i.i.d. from \(\distribution\).  The ERM oracle \(\oracle\) then gives us \(\widehat{h}\) such that 
    \[\widehat{h} = \argmin_{h \in \hclass} \ell_\centers(h, \distribution_m).\]

    \Cref{lem:ERM-approximation} implies that, for a rotationally complete \(\hclass\), and
    \[h^* = \argmin_{h \in \hclass} \cost(h, \distribution),\]
    the ERM minimizer \(\widehat{h}\) achieves
    \[\ell_{\centers}(\widehat{h}, \distribution_m) \le O(1) \cdot (\cost(h^*, \distribution_m) + \cost(\centers, \distribution_m)),\]
    Combining this with \Cref{eq:uniform-convergence}, we get that 
    \[\ell_{\centers}(\widehat{h}, \distribution) \le O(1) \cdot (1 + \cost(h^*, \distribution_m) + \cost(\centers, \distribution_m)).\]

    We bound the deviation of \(\cost(h^*, \distribution_m)\) and \(\cost(\centers, \distribution_m)\) from \(\cost(h^*, \distribution)\) and \(\cost(\centers, \distribution)\).  Specifically, let \(\centers_{h^*}\) be the optimal set of centers for each partition of \(h^*\) with respect to \(\distribution\).  We can write \(\cost(h^*, \distribution_m) \le \frac{1}{m} \sum_{i = 1}^m X_i\), where \(X_i = \md \left(C_{h^*}^{(h^*(I_i))}, S_i \right) \), where \((I_i, S_i)\) are the independent samples drawn from \(\distribution\).  Let \(X = \sum_{i = 1}^m X_i\).  By a Chernoff-Hoeffding bound, we have that   
    \begin{align*}
        \Pr \left[ X \ge \E[X] + m \right] &\le \exp \left( - \frac{(\E[X] + m)^2}{\E[X] + m} \right) \\
        &\le \exp( - \frac{1}{2}(\E[X] + m)).
    \end{align*}
    Thus, we have that as long as \(m \ge O(\log (\frac{1}{\delta}))\), we have 
    \[\cost(h^*, \distribution_m) \le 2 \cdot \cost(h^*, \distribution) + 1,\]
    with probability \(\ge 1 - \frac{\delta}{4}.\)

    Similarly, let \(Y = Y_1 + \dots + Y_m\), where \(Y_i = \md(\centers(S_i), S_i)\) where the \(S_i\) are the \(m\) independent samples we drew from \(\distribution\).  Then \(\cost(\centers, \distribution_m) = \frac{1}{m} \sum_{i = 1}^m Y_i\).  By a Chernoff-Hoeffding bound, we have 
    \begin{align*}
        \Pr \left[ Y \ge \E[Y] + m \right] &\le \exp( - \frac{1}{2}(\E[Y] + m)).
    \end{align*}
    Thus, as long as \(m \ge O(\log (\frac{1}{\delta}))\), we have 
    \[\cost(\centers, \distribution_m) \le \cost(\centers, \distribution) + 1,\]
    with probability \(\ge 1 - \frac{\delta}{4}\).

    Taking a union bound over these events, we have that with \(m = O(\dmax^2 (d + \ln \frac{1}{\delta}))\) samples, 
    \[\E_{(I, S) \sim \distribution} \left[ \md(\centers^{\widehat{h}(I)}, S) \right] \le O(1) \left( 1 + \cost(\centers, \distribution) + \min_{h' \in \hclass} \cost(h', \distribution) \right),\]
    with probability \(\ge 1 - \delta\).
\end{proof}

\subsection{Algorithmic guarantee}
\label{subsec:partitions-guarantee}

Finally, we can use the above procedure to design an algorithm that can take advantage of course information in the form of partitions of \(\solutionspace\), to achieve a potentially improved runtime guarantee for future instances. 

\begin{theorem}[Competing with a hypothesis class of \(k\)-wise partitions offline]
    Given an algorithm-with-predictions \(\alg\), a learnable (finite \(\psib\)-dimension) rotationally complete hypothesis class \(\hclass\) of \(k\)-wise partitions of \(\instancespace\), an ERM oracle \(\oracle\) that can compute \(\argmin_{h \in \hclass} \ell_\centers(h, \distribution')\) for empirical distributions \(\distribution'\), and \(O(\dmax^2 (d + \ln \frac{1}{\delta}))\) samples from distribution \(\distribution\) over \(\instancespace \times \solutionspace\), 
    it is possible to learn an \(h : \instancespace \rightarrow [k]\) and set of centers \(\centers = \left( C^{(1)}, \dots, C^{(k)} \right)\) that are an \(O(1)\) approximation to 
    \[\min_{h \in \hclass} \min_{\centers \in \solutionspace^k} \E_{(I, S) \sim \distribution} \left[ \md(S, C^{(h(I))}) \right].\]

    Furthermore, given an \(h : \instancespace \rightarrow [k]\) and \(\centers = \left( C^{(1)}, \dots, C^{(k)} \right)\), we can design a procedure such that the expected time to solve a new instance is 
    \[\E_{(I, S) \sim \distribution} \left[ \mathrm{runtime}(h(I)) + \md(S, C^{(h(I))})\right].\]
    \label{thm:competing-with-hypothesis-class}
\end{theorem}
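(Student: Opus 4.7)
The plan is to assemble the pieces already developed in Sections~3 and~4: first learn a near-optimal unconstrained set of $k$ centers $\widehat{\centers}$ from the samples using \Cref{lem:learning-k-fixed-points}, then feed $\widehat{\centers}$ into the partition-learning machinery of \Cref{lem:learning-guarantee-for-k-wise-partitions} to obtain $\widehat{h}\in\hclass$, and finally argue at inference time by simply evaluating $\widehat{h}(I)$ and running $\alg(I,\widehat{\centers}^{(\widehat{h}(I))})$.

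Concretely, I would start by invoking \Cref{lem:learning-k-fixed-points} on the marginal of $\distribution$ over $\solutionspace$ using a batch of samples. Since Assumption~\ref{ass:metric-distance-computable} gives $\dmin=1$, the $k$-medians optimum $\cost(\centers^*,\distribution)$ is at least a constant, so $O(\dmax\, k \log(1/\delta))$ samples suffice to produce $\widehat{\centers}\in\solutionspace^k$ with $\cost(\widehat{\centers},\distribution)\le O(1)\cdot\cost(\centers^*,\distribution)$ with high probability. The key observation linking the two objectives is that $k$-medians is a relaxation of the partition objective: for any $h\in\hclass$ one has
\[
\cost(\centers^*,\distribution)\ \le\ \cost(\centers_h,\distribution)\ \le\ \cost(h,\distribution),
\]
because assigning each $S$ to its nearest center can only decrease cost relative to assigning via $h$, and $\centers^*$ is the $k$-medians minimizer. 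Hence $\cost(\widehat{\centers},\distribution)\le O(1)\cdot\min_{h\in\hclass}\cost(h,\distribution)$.

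Next, holding $\widehat{\centers}$ fixed, I would apply \Cref{lem:learning-guarantee-for-k-wise-partitions} with the rotationally complete class $\hclass$ and the given ERM oracle. With $O(\dmax^2(d+\ln(1/\delta)))$ samples, this step returns $\widehat{h}\in\hclass$ satisfying
\[
\E_{(I,S)\sim\distribution}\bigl[\md(\widehat{\centers}^{\widehat{h}(I)},S)\bigr]\ \le\ O(1)\Bigl(1+\cost(\widehat{\centers},\distribution)+\min_{h'\in\hclass}\cost(h',\distribution)\Bigr).
\]
Chaining with the previous inequality and absorbing the additive $1$ into $O(1)\cdot\min_{h}\cost(h,\distribution)$ (using $\cost(h,\distribution)\ge\dmin=1$ whenever the distribution is non-degenerate), the right-hand side becomes $O(1)\cdot\min_{h\in\hclass}\cost(h,\distribution)=O(1)\cdot\min_{h\in\hclass}\min_{\centers\in\solutionspace^k}\E[\md(S,C^{(h(I))})]$, which is exactly the approximation claim. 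A union bound over the two high-probability events completes the learning half.

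For the algorithmic half, the procedure on a fresh $(I,S)\sim\distribution$ is: compute $i=h(I)$ in time $\mathrm{runtime}(h(I))$, then invoke the warm-start routine $\alg(I,C^{(i)})$, which by definition halts in time $\md(S,C^{(h(I))})$. Taking expectations over $(I,S)\sim\distribution$ yields the stated bound. The one subtlety worth flagging, and the only place care is needed, is the interaction of the two sample-complexity bounds: one must verify that $O(\dmax^2(d+\ln(1/\delta)))$ samples are simultaneously enough for \Cref{lem:learning-k-fixed-points} (which needs $\widetilde{O}(\dmax k\log(1/\delta))$ after using $\cost(\centers^*,\distribution)\ge 1$) and for \Cref{lem:learning-guarantee-for-k-wise-partitions}; since any learnable hypothesis class of $k$-wise partitions must have $d=\Omega(\log k)$, this dominates and the combined sample count remains within the stated budget. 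No new technical machinery is needed beyond the lemmas already proved.
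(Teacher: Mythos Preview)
Your proposal is correct and follows essentially the same approach as the paper: first learn an approximately optimal $k$-medians clustering $\widehat{\centers}$ via \Cref{lem:learning-k-fixed-points}, then apply \Cref{lem:learning-guarantee-for-k-wise-partitions} with that $\widehat{\centers}$, chain using the observation that the best $k$-medians cost lower-bounds the best partition cost and absorb the additive~$1$ via $\dmin=1$, and finally run $\alg(I,C^{(h(I))})$ at inference time. The one place where you are actually more careful than the paper is in reconciling the two sample-complexity requirements; the paper simply asserts the samples suffice, whereas you flag (correctly) that one must check the $\widetilde{O}(\dmax k\log(1/\delta))$ requirement from \Cref{lem:learning-k-fixed-points} is absorbed into the stated $O(\dmax^2(d+\ln(1/\delta)))$ budget---though your justification that $d=\Omega(\log k)$ suffices is a bit loose (you really need $\dmax\cdot d\gtrsim k$), so this minor point deserves a second look.
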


\begin{proof}
    By \Cref{lem:learning-k-fixed-points} we know that we can use samples from \(\distribution\) to find a set of centers \(\widehat{\centers}\) that are an \(O(1)\)-approximation to 
    \[\argmin_\centers \cost(\centers, \distribution).\]
    By \Cref{lem:learning-guarantee-for-k-wise-partitions} we have that, using this \(\widehat{\centers}\), and additional samples from \(\distribution\), we can find a hypothesis \(\widehat{h} \in \hclass\) such that 
    \begin{align}
        \E_{(I, S) \sim \distribution} \left[ \md(\widehat{\centers}^{\widehat{h}(I)}, S) \right] &\le O(1) \left( 1 + \cost(\widehat{\centers}, \distribution) + \min_{h' \in \hclass} \cost(h', \distribution) \right). \nonumber \\
        &\le O(1) \left( 1 + \min_{\centers \in \solutionspace^k} \cost(\centers, \distribution) + \min_{h' \in \hclass} \cost(h', \distribution) \right) \nonumber \\
        &\le O(1) \cdot \min_{h' \in \hclass} \cost(h', \distribution) \label{eq:hypothesis-more-costly-than-clustering} \\
        &\le O(1) \cdot \min_{h \in \hclass} \min_{\centers \in \solutionspace^k} \E_{(I, S) \sim \distribution} \left[ \md(S, C^{(h(I))}) \right], \nonumber
    \end{align}
    where the \Cref{eq:hypothesis-more-costly-than-clustering} follows because the cost of the best hypothesis is lower bounded by the cost of the best clustering, and because we assume all distances in \(\solutionspace\) are lower bounded by 1 (see preliminaries, \Cref{sec:preliminaries}).  

    Finally, once we have a hypothesis \(h \in \hclass\) and the centers \(\centers_h\), on subsequent instances \(I\) drawn from \(\distribution\), we can compute \(\alg(I, \centers_h^{(h(I))})\), in expected time 
    \[\E_{(I, S) \sim \distribution} \left[ \mathrm{runtime}(h(I)) + \md(S, C^{(h(I))})\right].\]
\end{proof}

\section{Competing with trajectories online}

In this section, we consider the problem of solving a series of instance that arrive online. 
Consider an instance \(I \in \instancespace\) with (unknown) true solution \(S \in \solutionspace\).  As before, we assume that the solution space \(\solutionspace\) is equipped with a metric distance \(\md\), which we can calculate efficiently (see Assumption \ref{ass:metric-distance-computable}).  We can think of an algorithm with predictions as searching the solution space \(\solutionspace\) by growing a ball around a prediction \(P \in \solutionspace\) of our choice.  The cost of the algorithm is the radius of the ball necessary to find \(S\).  Furthermore, we can search from multiple points, and the total cost will be the sum of the radii searched from each of the points.  

\begin{definition}[Online Ball Search]\label{def:online-ball-search}
    In the \emph{online} setting, we consider instances arriving over \(T\) days.  On each day \(t \in [T]\) an instance \(I_t \in \instancespace\) arrives, with (unknown) solution \(S_t \in \solutionspace\).   We are given access to an algorithm with predictions \(\alg\) such that for any prediction \(P \in \mathbf{S}\), the runtime of \(\alg(I_t, P)\) is bounded by \(\md(S_t, P)\).  On each day \(t \in [T]\), the algorithm must output \(S_t\).  
\end{definition}

In the online setting, we want to minimize the total work that the algorithm does over a sequence of \(T\) instances.   We make \emph{no distributional assumptions} about the instance-solution pairs that arrive.  Instead, we approach the problem through the lens of \emph{competitive analysis}, and provide algorithms that compete with the best strategies in hindsight.  

In \Cref{subsec:offline-baselines}, we define the offline strategies that we are competing against.  This requires careful consideration of various modeling constraints.  In \Cref{subsec:reduction-to-k-server}, we give an algorithm that is \(O(k^2)\)-competitive against these baselines in the total radius searched, via a reduction to \(k\)-server.  We note that this does not immediately solve our problem of interest, as we wish to bound the \emph{total work} done by our algorithm, which is only lower bounded by the total radius searched.  In \Cref{subsec:improved-runtime}, we give an algorithm that is \(O(k^4 \ln^2 k)\)-competitive in the total radius searched, and which has total runtime bounded by an \(O(1)\) factor times the total radius searched.  To approach this, we use significantly different techniques than in the reduction to \(k\)-server.

\subsection{Offline Baselines}
\label{subsec:offline-baselines}
We will analyze algorithms for this problem under the paradigm of competitive analysis, where we compare the performance of our algorithm to the performance of the best offline strategy in hindsight.  To do this, we must define the set of offline strategies that we are competing with.  

In our problem setting, on each day, the algorithm is allowed to search outward from any set of points, and pays for the total radius that is searched before that day's point is found.  A natural idea is to allow the offline strategies to have the same form.  However, the offline strategy has full knowledge of all of the requests that arrive in the sequence.  So, on each day, it could search exactly from that day's request, achieving zero cost.  Thus, this set of offline strategies is not particularly meaningful.  

This motivates us to consider a restricted set of offline strategies, that captures some structure in the input that we could hope to take advantage of.  This is reasonable, as we must leverage some structure to get a fast algorithm for sequences of inputs.  Otherwise, if we had a fast algorithm for arbitrary sequences of inputs, this would imply a faster algorithm for one arbitrary input.  

Previous work imposes structure by providing a guarantee that competes with the best fixed prediction in hindsight \cite{KBTV22}.  This can be interpreted as imposing the structure that the data is well-clustered, and therefore the best fixed point performs well for the distribution of inputs.  This objective function is analogous to a 1-medians objective, 
as our cost is the sum of the distances from the requests to the center (best fixed prediction in hindsight).

The first extension is to allow the offline strategy to move over time.  We must be careful about how we do this, as allowing the offline point to move arbitrarily on each day results in the zero cost issue that we discussed earlier.  Thus, we allow to the point to move, but we charge the offline strategy for the total distance that the point moves.  That is, the cost to the offline strategy on a day \(t\) is the sum of the \emph{movement cost}, the distance that the strategy moved its prediction on day \(t\), and the \emph{hit cost}, the distance from the request to the prediction after the prediction moves.  We call this strategy a \emph{trajectory}.

\begin{definition}[Trajectory]
    A \emph{trajectory}, with respect to a sequence of instances \(\{I_t : t \in [T]\}\), is a sequence of predictions \(P_t \in \mathbf{S}\), for \(t \in [T]\).  The \emph{cost} of a trajectory is the sum of the prediction costs for each day's prediction (``hit cost"), plus the total distance that the trajectory moves (``movement cost"),  
    \[\mathrm{cost}(\{P_t : t \in [T]\}) = 
    \left[ \sum_{i = 1}^T \md(P_t, S_t) \right] + \left[ \sum_{i = 1}^T \md(P_{t - 1}, P_t) \right],\]
    where \(S_t\) is the solution associated with instance \(I_t\), and we define the initial \(P_0 = \mathbf{0}\), is a fixed arbitrary starting prediction.
\end{definition}

The second extension is that we compete against offline strategies that consist of \(k\) points.  That is, on each day, the hit cost of the offline strategy is the distance from the request to the nearest of the \(k\) points.  This setting is analogous to a \(k\)-medians objective, as the cost of each request is the distance from the request to the closest of \(k\) centers.  We note that the extension from one center to \(k\) centers makes the offline problem of choosing the best strategy go from being convex, to being non-convex.  

\begin{definition}[Multiple trajectories]
    A collection of \emph{\(k\) trajectories}, with respect to a sequence of instances \(\{I_t : t \in [T]\}\), is a sequence of predictions \(\{P_t \in \solutionspace: t \in [T]\}\), and associates each day \(t\) with one of the trajectories \(\in [k]\).  We use \(\mathcal{T}^{(i)} \subseteq [T]\) to refer the days that are associated with trajectory \(i\), and we denote the collection of trajectories by \(\left(\{P_t : t \in \mathcal{T}^{(1)}\}, \dots, \{P_t : t \in \mathcal{T}^{(k)}\} \right)\).

    For a day \(t \in [T]\), we use \(\prev^{(i)}(t)\) to refer to the closest day previous to \(t\) that belongs to trajectory \(i\), and \(\mnext^{(i)}(t)\) to refer to the closest day subsequent to \(t\) that belongs to trajectory \(i\).  If \(t\) is the first day in its trajectory, then we define \(\prev^{(i)}(t) = P_0 = \mathbf{0}\).  

    The \emph{cost} of the collection of trajectories is the sum of the costs of the \(k\) trajectories.  The cost of trajectory \(i\) is the single trajectory cost of \(\{P_t : t \in \mathcal{T}^{(i)}\}\) with respect to \(\{I_t : t \in \mathcal{T}^{(i)}\}\).  That is, 
    \[\cost(\{P_t : t \in \mathcal{T}^{(i)}\}) = \left[ \sum_{t \in \mathcal{T}^{(i)}} 
    \md(P_t, S_t) \right] + \left[ \sum_{t \in \mathcal{T}^{(i)}} 
    \md(P_{\prev^{(i)}(t)}, P_t) \right] ,\]
    \[\cost\left(\{P_t : t \in \mathcal{T}^{(1)}\}, \dots, \{P_t : t \in \mathcal{T}^{(k)}\} \right) = \sum_{i = 1}^k \cost(\{P_t : t \in \mathcal{T}^{(i)}\}) .\]
\end{definition}

We note that while the offline strategy is limited to maintaining a palette of \(k\) predictions, and must pay to move the predictions, we do not make this requirement of the online algorithm.   

Another way to interpret a \(k\)-trajectory baseline for a sequence of requests, is that it is essentially the offline \(k\)-server cost for those requests.  In the \(k\)-server problem, the algorithm must maintain a set of \(k\)-servers in a metric space \(\mathbf{S}\).  On each day, a request arrives at some point \(S \in \mathbf{S}\), and the algorithm must move one of its servers to \(S\).  The cost to the algorithm is the total distance it moves its servers.  The offline baseline is the best way to move the servers, when the sequence of requests is known ahead of time.

\begin{lemma}[Multiple trajectories are approximately \(k\)-server baselines] \label{lem:trajectories-equal-offline-k-server}
    Let \(\mathcal{S} = (S_1, \dots, S_T)\) be a sequence of \(T\) requests (solutions).  Let \(\{P_t : t \in \mathcal{T}^{(1)}\}, \dots, \{P_t : t \in \mathcal{T}^{(k)}\}\) be the offline optimal collection of \(k\) trajectories for \(R\).  Let \(\mathrm{serveropt}_k(\mathcal{S})\) be the cost of the offline optimal \(k\)-server solution serving \(\mathcal{S}\).  We have that 
    \[\cost(\{P_t : t \in \mathcal{T}^{(1)}\}, \dots, \{P_t : t \in \mathcal{T}^{(k)}\}) \le \mathrm{serveropt}_k(\mathcal{S}) \le 2 \cdot \cost(\{P_t : t \in \mathcal{T}^{(1)}\}, \dots, \{P_t : t \in \mathcal{T}^{(k)}\}).\]
\end{lemma}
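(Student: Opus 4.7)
The plan is to establish each of the two inequalities independently by converting one kind of solution into the other and comparing costs via the triangle inequality.

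For the left inequality, $\cost(\{P_t\}_{t \in \mathcal{T}^{(1)}}, \dots, \{P_t\}_{t \in \mathcal{T}^{(k)}}) \le \mathrm{serveropt}_k(\mathcal{S})$, I would take the optimal offline $k$-server solution and read off $k$ trajectories from it. Specifically, assign each day $t$ to the trajectory of the server that handles request $S_t$, and set $P_t := S_t$ on that day. Then the hit cost $\md(P_t, S_t)$ is zero for every $t$, while the movement cost $\md(P_{\prev^{(i)}(t)}, P_t) = \md(S_{\prev^{(i)}(t)}, S_t)$ equals exactly the distance the corresponding server travels between consecutive requests it serves. Summing over all days gives precisely the $k$-server cost, so the best trajectory collection does no worse.

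For the right inequality, $\mathrm{serveropt}_k(\mathcal{S}) \le 2 \cdot \cost(\cdots)$, I would start from the optimal trajectory collection and build a $k$-server solution by assigning server $i$ to follow trajectory $i$: on each day $t \in \mathcal{T}^{(i)}$, server $i$ moves from its current location $S_{\prev^{(i)}(t)}$ (or the initial point $\mathbf{0}$ on the first day of the trajectory) directly to $S_t$. The cost incurred is $\sum_{t \in \mathcal{T}^{(i)}} \md(S_{\prev^{(i)}(t)}, S_t)$. Apply the triangle inequality through the corresponding pair of trajectory predictions:
\[
\md(S_{\prev^{(i)}(t)}, S_t) \le \md(S_{\prev^{(i)}(t)}, P_{\prev^{(i)}(t)}) + \md(P_{\prev^{(i)}(t)}, P_t) + \md(P_t, S_t).
\]
The middle terms sum to the movement cost of trajectory $i$. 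The last terms sum to the hit cost of trajectory $i$. The first terms, when summed over $t \in \mathcal{T}^{(i)}$, count each $P_s$ once as the predecessor of the subsequent day in $\mathcal{T}^{(i)}$ (with the initial term $\md(\mathbf{0}, \mathbf{0}) = 0$), and so are bounded by the hit cost of trajectory $i$ as well. Adding up gives a bound of $2 \cdot \cost(\text{trajectory } i)$, and summing over $i$ yields the desired factor of $2$.

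Neither direction has a real obstacle: the upper bound is essentially tautological once one sees that the $k$-server solution is itself a legal trajectory collection with zero hit cost, and the lower bound is a clean telescoping triangle-inequality argument. The only mild care needed is to handle the boundary terms for the first day of each trajectory using the convention $P_{\prev^{(i)}(t)} = \mathbf{0}$, but this is absorbed into the same triangle-inequality bookkeeping without changing the constants.
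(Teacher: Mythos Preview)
Your proposal is correct and follows essentially the same approach as the paper: for the first inequality you view the optimal $k$-server solution as a trajectory collection with zero hit cost, and for the second you route server $i$ along trajectory $i$ and apply the same three-term triangle inequality through $P_{\prev^{(i)}(t)}$ and $P_t$, bounding the sum of first and last terms by twice the hit cost.
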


\begin{proof}
    The first inequality follows from observing that a \(k\)-server solution that serves the requests \(\mathcal{S}\) is a valid collection of \(k\) trajectories, that has movement cost equal to the cost of the \(k\)-server solution, and zero hit cost.  

    For the second inequality, we convert the collection of \(k\) trajectories into an offline \(k\)-server solution as follows.  We assign each of trajectory one server, that services the requests that correspond to that trajectory.  We can bound the cost to server \(i\) by the cost to trajectory \(i\).  We can bound the cost of this \(k\)-server solution by 
    \begin{align*}
        \sum_{t \in \mathcal{T}^{(i)}} \md(S_{\mathrm{prev}^{(i)}(t)}, S_t) &\le \sum_{t \in \mathcal{T}^{(i)}} \left[ \md(S_{\mathrm{prev}^{(i)}(t)}, P_{\mathrm{prev}^{(i)}(t)}) + \md(P_{\mathrm{prev}^{(i)}(t)}, P_t) + \md(P_t, S_t) \right] \\
        &\le 2 \left[ \sum_{t \in \mathcal{T}^{(i)}} \md(S_{t}, P_t) \right] +  \left[ \sum_{t \in \mathcal{T}^{(i)}} \md(P_{\mathrm{prev}^{(i)}(t)}, P_t) \right] \\
        &\le 2 \cdot \cost(\{P_t : t \in \mathcal{T}^{(1)}\}, \dots, \{P_t : t \in \mathcal{T}^{(k)}\}).
    \end{align*}
\end{proof}

This observation already gives the following algorithm that is competitive against one trajectory.  

\begin{algorithm}
\caption{``Predict yesterday's solution"}
\label{alg:predict-yesterdays-solution}
\begin{algorithmic}[1]
    \STATE \(P = \mathbf{0}\) 
    \FOR{day \(t\), instance \(I_t\) arrives}
        \STATE \(S_t = \alg(I_t, P)\)
        \STATE \(P = S_t\)
        \STATE Output solution \(S_t\)
    \ENDFOR
\end{algorithmic}
\end{algorithm}

\begin{corollary}[``Predict yesterday's solution" competes with any single trajectory]
    The ``predict yesterday's solution" strategy (\Cref{alg:predict-yesterdays-solution}) is \(O(1)\)-competitive with the best single trajectory in hindsight.  Furthermore, the total runtime of \Cref{alg:predict-yesterdays-solution} is bounded by \(O(1)\) times the total radius searched. 
    \label{cor:predict-yesterday-is-competitive}
\end{corollary}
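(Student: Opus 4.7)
The plan is to bound the work done by \Cref{alg:predict-yesterdays-solution} on each day directly in terms of the cost paid by an arbitrary single trajectory, and then sum. On day $t$, the algorithm invokes $\alg(I_t, S_{t-1})$ with $S_0 = P_0 = \mathbf{0}$, and so the radius searched on day $t$ is at most $\md(S_{t-1}, S_t)$. Fix an arbitrary trajectory $\{P_t : t \in [T]\}$ with $P_0 = \mathbf{0}$. The first step is to apply the triangle inequality to insert $P_{t-1}$ and $P_t$ into the day-$t$ cost:
\[
\md(S_{t-1}, S_t) \;\le\; \md(S_{t-1}, P_{t-1}) + \md(P_{t-1}, P_t) + \md(P_t, S_t).
\]

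Next, I would sum this bound over $t = 1, \dots, T$. The middle terms contribute exactly the total movement cost $\sum_t \md(P_{t-1}, P_t)$ of the trajectory. The first and third terms are two copies of the trajectory's hit cost (one shifted by a day, but both telescope to the same quantity of $\sum_t \md(P_t, S_t)$ up to boundary terms involving $\md(S_0, P_0) = 0$). Altogether,
\[
\sum_{t=1}^T \md(S_{t-1}, S_t) \;\le\; 2 \sum_{t=1}^T \md(P_t, S_t) + \sum_{t=1}^T \md(P_{t-1}, P_t) \;\le\; 2 \cdot \cost(\{P_t : t \in [T]\}).
\]
Since this holds for every trajectory, it holds in particular for the best trajectory in hindsight, giving the $O(1)$-competitive ratio in total radius searched.

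For the runtime guarantee, I would observe that each iteration of the \texttt{for}-loop performs a constant amount of bookkeeping (one assignment, one output) in addition to running $\alg$. By Assumption \ref{ass:metric-distance-computable} distances are scaled so that $\dmin = 1$, hence the radius searched on any day is at least $1$, and the per-day overhead is dominated by the per-day call to $\alg$. Summing, the total runtime is at most a constant times $\sum_t \md(S_{t-1}, S_t)$, which we just bounded by $O(1) \cdot \cost(\{P_t\})$.

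There is no real obstacle here; the only subtlety to be careful about is the boundary of the telescoping sum (handling the $S_0 = P_0 = \mathbf{0}$ convention so that the two copies of the hit cost really do appear, rather than being off by a term $\md(S_T, P_T)$ or $\md(S_0, P_0)$). Once the indices are lined up, the triangle inequality immediately gives a competitive factor of $2$.
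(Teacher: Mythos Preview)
Your proof is correct and yields the same competitive factor of $2$. The paper's proof routes through \Cref{lem:trajectories-equal-offline-k-server}: it observes that \Cref{alg:predict-yesterdays-solution} pays exactly the optimal offline $1$-server cost $\sum_t \md(S_{t-1},S_t)$, and then invokes the lemma to say this is within a factor $2$ of the best trajectory. You instead unpack that lemma's proof inline for $k=1$, applying the triangle inequality $\md(S_{t-1},S_t)\le \md(S_{t-1},P_{t-1})+\md(P_{t-1},P_t)+\md(P_t,S_t)$ directly and summing. The underlying computation is identical; your version is self-contained, while the paper's version highlights the $k$-server connection that it will reuse later. Your handling of the runtime claim is also fine and matches the paper's.
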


\begin{proof}
    \Cref{lem:trajectories-equal-offline-k-server} tells us that the best offline trajectory has cost within a factor 2 of the optimal offline 1-server strategy.  For any sequence of requests, the optimal 1-server strategy is simply the one that moves the single server to each request on each day, and pays cost 
    \[\sum_{t = 1}^T \md(S_{t - 1}, S_t).\]
    This is also the cost of \Cref{alg:predict-yesterdays-solution}.  Thus, \Cref{alg:predict-yesterdays-solution} is 2-competitive against any fixed strategy.  
    
    Furthermore, the runtime of \Cref{alg:predict-yesterdays-solution} is dominated by the time to run the algorithms-with-predictions subroutine, so the total runtime of \Cref{alg:predict-yesterdays-solution} is bounded by \(O(1)\) times the total radius searched. 
\end{proof}

\begin{remark*}
    The guarantee given by this strategy is already stronger than what is shown in previous work.  \cite{KBTV22} give a strategy that competes with the best \emph{fixed} prediction in hindsight.  This strategy allows us to compete with trajectories that are adaptive, in that they move over time.   
\end{remark*}

\subsection{Reduction to \(k\)-server}
\label{subsec:reduction-to-k-server}

We consider the setting of competing against a collection of \(k\) trajectories. 
In this setting, we can think of the offline optimum as maintaining \(k\) prediction points.  On each day, the cost is the distance from today's solution to the minimum of these points.  Like the single trajectory setting, these points can move over time, and the offline cost is charged for the total distance that the \(k\) points move, capturing an adaptive strategy.  

In this section, we show how get a competitive algorithm for the online ball search problem using a competitive algorithm for the \(k\)-server problem, with an \(k\) factor blow-up in the competitive ratio.  We then discuss why this does not immediately imply a fast algorithm for solving sequences of instances. 

In the \(k\)-server problem, an algorithm must maintain the positions of \(k\) servers in a metric space \(\solutionspace\).  On each day \(t \in [T]\), a request \(r_t \in \solutionspace\) is made, and the algorithm must move at least one server to \(r_t\).  The cost to the algorithm is the total distance that all of the servers move.  The competitive ratio of the algorithm is given with respect to the best offline solution, i.e. the best way to service the requests in order with \(k\) servers in hindsight.  For an 
introduction to the \(k\)-server problem, the reader is referred to the survey \cite{Kou09}, with the note that there has been a good deal of work on this problem since the survey was published.  

\begin{theorem}[Online ball search to \(k\)-server reduction]
    Given an \(\alpha\)-competitive algorithm for the \(k\)-server problem on metric space \(\solutionspace\), we can construct a \(O(k \alpha)\)-competitive algorithm for the online ball search problem on \(\solutionspace\).  
    \label{thm:online-to-k-server-reduction}
\end{theorem}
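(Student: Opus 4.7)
The plan is to use the $k$-server algorithm as a ``meta-controller'' that maintains the $k$ predictions we search from, while the parallel-running primitive from \Cref{lem:using-k-predictions-in-parallel} actually discovers each $S_t$. Concretely, at the start of day $t$ let $P_1^{(t)}, \dots, P_k^{(t)}$ denote the current server positions of the $k$-server algorithm $\mathcal{B}$. Run the warm-start algorithm $\alg(I_t, P_j^{(t)})$ in parallel over $j \in [k]$ until one of them completes, recovering $S_t$; then feed $S_t$ as the next request to $\mathcal{B}$ and update the server positions accordingly before moving to day $t+1$.

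Next I would bound the search cost. By \Cref{lem:using-k-predictions-in-parallel}, the work spent by our ball-search algorithm on day $t$ is at most $O(k) \cdot \min_{j} \md(S_t, P_j^{(t)})$. On the other hand, $\mathcal{B}$ must move at least one server to $S_t$ during day $t$ (by the definition of $k$-server), so its day-$t$ cost is at least $\min_j \md(S_t, P_j^{(t)})$. Summing over $t$, the total ball-search cost is
\[
\sum_{t=1}^T O(k) \cdot \min_j \md(S_t, P_j^{(t)}) \;\le\; O(k) \cdot \cost(\mathcal{B}).
\]

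Finally, I invoke the given competitive guarantee of $\mathcal{B}$: $\cost(\mathcal{B}) \le \alpha \cdot \mathrm{serveropt}_k(\mathcal{S})$, where $\mathcal{S} = (S_1,\dots,S_T)$ is the sequence of true solutions. Combining with \Cref{lem:trajectories-equal-offline-k-server}, which says $\mathrm{serveropt}_k(\mathcal{S}) \le 2 \cdot \text{OPT}_k$ where $\text{OPT}_k$ is the optimal cost over all collections of $k$ trajectories, yields a total ball-search cost of $O(k\alpha) \cdot \text{OPT}_k$, which is the desired $O(k\alpha)$-competitive ratio.

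The only subtlety I would double-check is the timing: our algorithm uses $P_j^{(t)}$ as predictions \emph{before} $\mathcal{B}$ serves the request $S_t$, so the $k$-server cost we charge is precisely the movement $\mathcal{B}$ incurs in serving $S_t$ starting from those same positions; this is fine because $\mathcal{B}$ only needs to know $S_t$ once we have discovered it. There is no circular dependence: $\mathcal{B}$'s decisions at day $t$ depend on $S_1, \dots, S_t$, and our day-$t$ search uses only its state through day $t-1$. The main (minor) obstacle is making sure this bookkeeping is stated cleanly, since the reduction does not directly address the \emph{runtime} of our algorithm (only the radii searched) --- the stronger runtime-competitive result is deferred to \Cref{subsec:improved-runtime}.
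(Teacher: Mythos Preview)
Your proposal is correct and follows essentially the same approach as the paper: maintain the $k$-server algorithm's server positions as predictions, run the warm-start algorithm in parallel from those positions to discover $S_t$, feed $S_t$ back as the next $k$-server request, and chain the inequalities (per-day ball-search cost $\le O(k)$ times per-day $k$-server movement, then $\alpha$-competitiveness, then \Cref{lem:trajectories-equal-offline-k-server}). Your discussion of the timing/non-circularity and the caveat that this bounds only radii searched (not runtime) are both on point and match the paper's framing.
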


\begin{proof}
    We maintain \(k\) predictions \(P_1, \dots, P_k \in \solutionspace\), which can also be thought of as the position of \(k\) servers.  

    On each day \(t \in [T]\), instance \(I_t\) arrives.  Our algorithm runs \(\alg(I_t, P_i)\) for all \(i \in [k]\) in parallel\footnote{Here, by ``in parallel" we refer to interleaving the steps of the various threads, resulting in a sequential algorithm.} at equal rates, until one of them terminates and finds \(S_t\).  Then, we take \(S_t\) to be the request for the \(k\)-server problem for day \(t\), and run our competitive \(k\)-server algorithm to update the positions of the predictions/servers.  

    The cost to the online ball search problem is \(k\) times the distance from the closest prediction (at the beginning of day \(t\)) to \(S_t\).  Since the \(k\)-server algorithm must service \(S_t\), the cost to the \(k\)-server algorithm on day \(t\) is at least the distance from the closest server/prediction (at the beginning of day \(t\)) to \(S_t\).  Thus the cost to the online ball search algorithm is bounded by at most \(k\) times the cost to the \(k\)-server algorithm.

    Finally, the \(k\)-server algorithm is \(\alpha\)-competitive against the best offline \(k\)-server solution that visits all of the requests \(S_t\).  By \Cref{lem:trajectories-equal-offline-k-server}, we know that this has cost at most \(O(1)\) times more than the \(k\) best trajectories for the sequence of \(S_t\)s.  Thus, this algorithm for online ball search is \(O(k \alpha)\)-competitive against \(k\) trajectories in the number of steps it takes of the subroutine \(\alg\).
\end{proof}

\begin{corollary}[\(O(k^2)\)-competitive algorithm for online ball search]
    There exists an \(O(k^2)\)-competitive algorithm for the online ball search problem.
    \label{cor:online-ball-search-via-k-server}
\end{corollary}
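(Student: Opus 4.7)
The plan is a direct application of Theorem 4.5 (the online ball search to $k$-server reduction) combined with a known competitive algorithm for the $k$-server problem on general metric spaces. Theorem 4.5 states that given an $\alpha$-competitive algorithm for $k$-server on $\solutionspace$, we get an $O(k\alpha)$-competitive algorithm for online ball search on $\solutionspace$. So it suffices to plug in an $O(k)$-competitive $k$-server algorithm.

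For this, I would invoke the Koutsoupias--Papadimitriou Work Function Algorithm, which is deterministic and $(2k-1)$-competitive for $k$-server on any metric space. Setting $\alpha = 2k-1 = O(k)$ in Theorem 4.5 immediately yields an $O(k \cdot k) = O(k^2)$-competitive algorithm for the online ball search problem. This is the entirety of the argument; the real technical content lives in Theorem 4.5 (the reduction), where the key observation was that the hit cost paid by the online ball search algorithm on each day is at most $k$ times the distance from $S_t$ to the closest of the $k$ predictions/servers, which in turn lower bounds the $k$-server cost on that day.

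The only minor subtlety to note is that the $k$-server algorithm is run on the sequence of solutions $S_1, S_2, \ldots$, not the instances, but these solutions are revealed to the algorithm exactly at the moment each day's search completes (as the warm-start subroutine returns $S_t$), so they are available in time to update server positions for the next round. There is no obstacle of any substance here, since everything is assembled from Theorem 4.5 and a black-box $k$-server algorithm; the corollary is essentially a statement that the best known general $k$-server bound, when fed into our reduction, yields a quadratic competitive ratio. I would briefly remark that any future improvement on the $k$-server competitive ratio (for instance, a resolution of the $k$-server conjecture giving a $k$-competitive algorithm) would transfer directly to an improved competitive ratio for online ball search through the same reduction.
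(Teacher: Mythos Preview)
Your proposal is correct and matches the paper's own justification essentially verbatim: the paper simply notes that the corollary follows from the reduction theorem together with the fact that the work function algorithm is $O(k)$-competitive for $k$-server (citing Koutsoupias--Papadimitriou). Your additional remarks about the $(2k-1)$ bound, the timing of when $S_t$ becomes available, and potential improvements via the $k$-server conjecture are all fine elaborations but not required.
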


This follows from the fact that the work function algorithm is \(O(k)\)-competitive for the \(k\)-server problem \cite{KP95}.    

The strategy of reducing online ball search to the \(k\)-server problem is both promising and presents some challenges.  On the one hand, the \(k\)-server problem is a natural and very well-studied problem in online algorithms.  It is promising that improvements to algorithms for the \(k\)-server problem can naturally translate to the online ball search problem.  

On the other hand, the online ball search problem appears to have some advantages that are lost in the reduction to \(k\)-server.  For example, an online ball search algorithm is not limited to maintaining a palette of exactly \(k\) predictions.  This allows us in \Cref{subsec:improved-runtime} to design an algorithm that achieves competitive guarantees for all values of \(k\) simultaneously.  Another issue with a \(k\)-server strategy, is that existing algorithms for the \(k\) server problem can be computationally intensive.  The work function algorithm, for example, requires solving a max-flow instance on each day \(t\) with \(O(t)\) vertices.  This causes the work to scale badly as the number of days grows large.  Finally, some algorithms for the \(k\)-server problem work well in the setting where the number of points \(n\) in the metric space \(\solutionspace\) is finite.  A notable example is the algorithm of Bansal, Buchbinder, Madry, and Naor which gives a randomized algorithm that achieves competitive ratio of \(\widetilde{O}(\log^2 k \log^3 n)\) against an oblivious adversary \cite{BBMN15}.  However, this is not ideal for our applications of interest, in which the solution space \(\solutionspace\) is often a space of vectors \(\subseteq \mathbb{R}^d\) with distance \(\md(u, v) = ||u - v||_q\) corresponding to a relevant norm \(q\).

\subsection{Algorithm with improved runtime}
\label{subsec:improved-runtime}
In this section we show an algorithm that is \(O(k^4 \ln^2 k)\)-competitive against any set of \(k\) trajectories.  Furthermore, the algorithm is deterministic (and thus resistant to an adaptive adversary), and is oblivious to the choice of \(k\).  That is, the competitive ratio holds for all \(k\) simultaneously.  Importantly, this algorithm also has total runtime bounded that is \(O(1)\) times the sum of radii searched.  We note that the techniques we use in this algorithm are significantly different than those in the reduction to \(k\)-server (\Cref{subsec:reduction-to-k-server}).

\Cref{alg:quadratic-decay} generalizes the single trajectory algorithm.  For the single trajectory case, on each day the algorithm searched from ``yesterday's solution."  For multiple trajectories, this is no longer enough, as the previous day's solution could come from some other trajectory, and be arbitrarily far away from today's solution.  

\Cref{alg:quadratic-decay} addresses this by searching from \emph{all} previous solutions in parallel.  These ``threads" are run at approximately harmonic rates, i.e. the thread of the \(i\)th most recent solution is run at rate \(\frac{1}{i^2 \ln^2 i}\).  This leaves the issue that the previous solution from the same trajectory as today could be arbitrarily far in the past, and be run at an extremely slow rate.  This is addressed by ``pruning" threads that are no longer fruitful.  That is, when the ball searched around a solution \(i\) fully contains the ball searched around a previous solution \(j\), the algorithm can stop running thread \(j\), as that work is redundant with thread \(i\).  In this case, we say thread \(i\) ``subsumes" thread \(j\).  Then, we can increase the rates of the threads that are lower priority than \(j\), and maintain that there is at most one thread being run at each rate \(\frac{1}{i^2 \ln^2 i}\) for each integer \(i\).  In the analysis we show that either the algorithm runs long enough for the previous solution from the same trajectory as today to be elevated to rate \(\ge \frac{1}{k^2 \ln^2 k}\) and eventually solve the instance, or the algorithm terminates more quickly than that, which is even better.  This allows us to bound the competitive ratio by \(O(k^4 \ln^2 k)\).

The main runtime overhead in \Cref{alg:quadratic-decay} comes from checking when to prune the slower threads.  For each step that the algorithm takes of a thread, it must check whether the thread has been subsumed.  For a thread running at rate \(\frac{1}{i^2 \ln^2 i}\) it must check the \(O(i)\) threads that are faster than it on each step.  Thus, even though the thread is running steps of the subroutine \(\alg\) at rate \(\frac{1}{i^2 \ln^2 i}\), it is doing total work at rate \(O(i) \cdot \frac{1}{i^2 \ln^2 i} = O(\frac{1}{i \ln^2 i} )\). The rates are chosen so that this series converges, and the total work of the algorithm can be bounded by \(O(1)\) times the work it does for the fastest thread.  

\begin{algorithm}
\caption{Quadratic decay}
\label{alg:quadratic-decay}
\begin{algorithmic}[1]
    \FOR{day \(t\), instance \(I_t\) arrives}
        \STATE 
        \STATE Initialize linked-list of active threads
        \FOR{\(t' : t \ge t' \ge 1\)}
            \STATE Create thread \(t'\) to run \(\alg(I_t, S_{t'})\) 
            \STATE \(\mathrm{radius}(t') := 0\)
            \STATE Append thread \(t'\) to list of active threads
        \ENDFOR
        \STATE 
        \STATE Run active threads in parallel, 
        \STATE \quad with \(i\)th active thread in list running at rate \(\frac{1}{i^2 \ln^2 i}\) (rate 1 for \(i = 1\)) \label{algline:run-threads-at-different-rates-quadratic}
        \STATE \quad track radius of each thread
        \WHILE{no thread completed}
            \FOR{step taken of thread running at \(i\)th rate}
                \STATE \(t_1\) := thread associated with this rate, found by walking down linked list of active threads 
                \FOR{active thread \(t_2\) with faster rate than \(t_1\)}
                \label{algline:kill-check-quadratic}
                    \IF{\(\md(S_{t_1}, S_{t_2}) \le \radius(t_2) - \radius(t_1)\)} \label{algline:kill-condition-quadratic}
                        \STATE Kill thread \(t_1\)
                        \STATE Remove thread \(t_1\) from linked list of active threads
                    \ENDIF
                \ENDFOR
            \ENDFOR 
        \ENDWHILE
        \STATE 
        \STATE \(S_t = \) solution of thread that completed
        \STATE Output \(S_t\)
    \ENDFOR
\end{algorithmic}
\end{algorithm}

\begin{definition}[Subsumes]
    If thread \(i\) kills thread \(j\) (\Cref{alg:quadratic-decay}, line \ref{algline:kill-condition-quadratic}), then we say that thread \(i\) \emph{subsumes} thread \(j\).  
    If thread \(j\) previously subsumed thread \(h\), then thread \(i\) now also subsumes thread \(h\).
\end{definition}

\begin{definition}[Radius of a thread]
    For a thread \(i\) at some given time, \(\radius(i)\) is the number of steps that thread \(i\) has been run so far.  
    If thread \(i\) subsumes thread \(j\), for analysis we will set \(\mathrm{rate}(j) = \mathrm{rate}(i)\).  We also continue updating the radius of \(j\) i.e., \(\radius(j)\) continues to increase at \(\mathrm{rate}(j)\). 
\end{definition}

\begin{definition}[Virtual radius]
    The \emph{virtual radius} of \Cref{alg:quadratic-decay} is the radius of the highest rate thread.  
\end{definition}

We will show that the virtual radius of the algorithm is within \(O(1)\) both of the total radius searched (\Cref{lem:quadratic-virtual-radius-bounds-total-radius}) and the total runtime \(O(1)\) (\Cref{lem:quadratic-total-radius-bounds-runtime}), making it a useful abstraction.  

We define the subsuming condition so that thread \(i\) subsumes thread \(j\), exactly when the ball of radius \(\radius(i)\) around \(S_i\) fully contains the ball of radius \(\radius(j)\) around \(S_j\).  Thus the part of \(\solutionspace\) that has been searched by thread \(j\), has also been searched by thread \(i\), and the work of thread \(j\) is redundant.  The following lemma shows that this invariant is maintained over the run of the algorithm.  

\begin{restatable}[Subsuming identity]{lemma}{subsuminginequality}
    If thread \(i\) subsumes thread \(j\), then 
    \[\md(S_i, S_j) \le \radius(i) - \radius(j).\]
    \label{lem:subsumes-ineq}
\end{restatable}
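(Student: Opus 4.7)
The plan is to prove this by induction on the structure of the subsumption relation, separating the \emph{direct} subsumption case (when thread $i$ actually kills thread $j$ on line~\ref{algline:kill-condition-quadratic}) from the \emph{transitive} case (when $i$ inherits the subsumption of $j$ through some intermediate thread $h$ that previously subsumed $j$).

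First I would dispatch the direct case. Here the kill condition on line~\ref{algline:kill-condition-quadratic} is literally $\md(S_i, S_j) \le \radius(i) - \radius(j)$, so the inequality holds at the instant of killing. The next observation is that, per the definition of radius, once $i$ subsumes $j$ we set $\mathrm{rate}(j) = \mathrm{rate}(i)$. Thus both radii grow at the same rate and the gap $\radius(i) - \radius(j)$ is frozen, preserving the inequality for all subsequent times.

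Next I would handle the transitive case. Suppose $i$ directly subsumes some intermediate thread $h$, and $h$ had previously subsumed $j$ (possibly itself via a chain). By the inductive hypothesis applied to the earlier-in-time subsumption, $\md(S_h, S_j) \le \radius(h) - \radius(j)$ held when $h$ subsumed $j$, and has continued to hold since, because $j$'s rate was synchronized with $h$'s at that moment. By the direct case applied at the moment $i$ subsumes $h$, we also have $\md(S_i, S_h) \le \radius(i) - \radius(h)$. A single application of the triangle inequality then yields
\[
\md(S_i, S_j) \le \md(S_i, S_h) + \md(S_h, S_j) \le (\radius(i) - \radius(h)) + (\radius(h) - \radius(j)) = \radius(i) - \radius(j).
\]
As in the direct case, after this transitive subsumption $j$'s rate is re-synchronized (now to $\mathrm{rate}(i)$), so the inequality is preserved at all later times as well.

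The main obstacle, such as it is, is purely bookkeeping about rates and times: one has to be careful that once the inequality is established at the instant of (possibly transitive) subsumption, subsequent rate updates do not disturb it, which relies precisely on the convention that subsumed threads inherit their subsumer's rate. The geometric content reduces to a single triangle inequality, so I do not expect substantive difficulty beyond setting up the induction cleanly.
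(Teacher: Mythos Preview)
Your proposal is correct and follows essentially the same approach as the paper: the paper inducts on the number of kill events, splitting into the case where $i$ already subsumed $j$ at the previous kill event (your rate-synchronization preservation argument) and the case where the subsumption is newly established, either directly or through an intermediate $h$ (your direct and transitive cases), with the transitive case handled by the same triangle-inequality computation you give. The only difference is packaging: the paper makes the induction variable explicit (number of kill events) whereas you induct on the subsumption chain, but the content is identical.
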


\begin{proof}
    We show this via induction on the number of times the ``kill condition" (Line \ref{algline:kill-condition-quadratic}) is triggered in \Cref{alg:quadratic-decay}.  In the base case, at the outset of the algorithm, no thread subsumes any other thread, so the statement holds vacuously. 

    Now, consider the \(s\)th kill event.  Consider a thread \(i\) that subsumes a thread \(j\) at this point (not necessarily the threads that just triggered the kill condition).  
    There are two cases.  
    \begin{itemize}
        \item[\textit{Case 1.}] Thread \(i\) subsumed thread \(j\) previously, at the \((s - 1)\)th kill event.  This means that \(\rate(j) = \rate(i)\), so \(\radius(i)\) and \(\radius(j)\) increased by the same amount since the previous kill event.  Thus, 
        \[\md(S_i, S_j) \le \radius(i) - \radius(j)\]
        continues to hold from the induction hypothesis. 
        
        \item[\textit{Case 2.}] Thread \(i\) killed thread \(j\), or thread \(i\) killed a thread \(h\) that subsumes thread \(j\), at the \(s\)th kill event.  

        If thread \(i\) killed thread \(j\), then by the kill condition (\Cref{alg:quadratic-decay}, line \ref{algline:kill-condition-quadratic}), we have that 
        \[\md(S_i, S_j) \le \radius(i) - \radius(j).\]

        If thread \(i\) killed a thread \(h\) that subsumes thread \(j\), then the kill condition gives us that 
        \[\md(S_i, S_h) \le \radius(i) - \radius(h).\]
        Since thread \(h\) already subsumes thread \(j\), Case 1 gives us that 
        \[\md(S_h, S_j) \le \radius(h) - \radius(j).\]
        Therefore, we have 
        \begin{align*}
            \md(S_i, S_j) &\le \md(S_i, S_h) + \md(S_h, S_j) &\text{triangle inequality} \\
            &\le \radius(i) - \radius(h) + \radius(h) - \radius(j) \\
            &\le \radius(i) - \radius(j).
        \end{align*}
    \end{itemize}
\end{proof}

The previous lemma implies that if a thread \(j\) is subsumed by a thread \(i\), then thread \(i\) is at least as effective at finding that day's solution as thread \(j\).  

\begin{restatable}{lemma}{subsumedcompleteimpliessomeonecomplete}
    If a subsumed thread \(i\) would have completed, i.e., 
    \[\md(S_i, S_t) \le \radius(i),\]
    then there must be some active thread of \Cref{alg:quadratic-decay} that has completed.  
    \label{lem:subsume-completion}
\end{restatable}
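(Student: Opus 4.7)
The plan is to chase the subsumption chain upward from thread $i$ to an active ancestor, then apply the triangle inequality together with \Cref{lem:subsumes-ineq} to show that this ancestor's ball around its solution already contains $S_t$, so the ancestor must have completed (unless some other active thread completed first).

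First, I would let $j$ denote the active thread at the top of the subsumption chain starting at $i$. Such a $j$ exists because whenever a thread is killed it is subsumed by another thread (which may in turn be subsumed by a third thread, etc.), and this chain must terminate at some currently active thread. By transitivity of subsumption as set up in the definition, thread $j$ subsumes thread $i$.

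Next, by \Cref{lem:subsumes-ineq} applied to $j$ and $i$, we have $\md(S_j, S_i) \le \radius(j) - \radius(i)$. Combining with the assumption $\md(S_i, S_t) \le \radius(i)$ via the triangle inequality gives
\begin{equation*}
\md(S_j, S_t) \le \md(S_j, S_i) + \md(S_i, S_t) \le \bigl(\radius(j) - \radius(i)\bigr) + \radius(i) = \radius(j).
\end{equation*}
Since $\alg(I_t, S_j)$ terminates within $\md(S_j, S_t)$ steps, and thread $j$ has already been run for $\radius(j) \ge \md(S_j, S_t)$ steps, thread $j$ must have completed by now, unless the \texttt{while}-loop exited earlier because some other active thread completed first. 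In either case some active thread of \Cref{alg:quadratic-decay} has completed, which is what we wanted.

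There is essentially no obstacle here; the only subtle point is making sure the ancestor we choose is in fact \emph{currently active} (not just a thread that at some earlier point subsumed $i$ but was itself later killed), which is handled by walking all the way to the top of the chain. The argument relies entirely on \Cref{lem:subsumes-ineq} together with the defining property of the warm-start subroutine $\alg$ that its runtime on $(I_t, P)$ is bounded by $\md(P, S_t)$.
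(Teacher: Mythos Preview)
Your proof is correct and follows essentially the same approach as the paper: both take the active thread $i^*$ (your $j$) that subsumes $i$, apply \Cref{lem:subsumes-ineq} to get $\md(S_{i^*},S_i)\le \radius(i^*)-\radius(i)$, and combine with the hypothesis via the triangle inequality to conclude $\md(S_{i^*},S_t)\le\radius(i^*)$. Your explicit remark about walking to the top of the subsumption chain to guarantee activeness is a nice clarification that the paper leaves implicit.
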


\begin{proof}
    Let \(i^*\) be the thread that subsumes thread \(i\).  We have that 
    \begin{align*}
        \md(S_{i^*}, S_t) &\le \md(S_{i^*}, S_i) + \md(S_i, S_t) &\text{triangle inequality} \\
        &\le \md(S_{i^*}, S_i) + \radius(i) \\
        &\le \radius(i^*) & i^* \text{ subsumes } i, \Cref{lem:subsumes-ineq},
    \end{align*}
    and therefore, thread \(i^*\) completed.  
\end{proof}

Because we choose the rates of the threads to form a converging series, it is sufficient to bound the virtual radius of the algorithm, as it is within \(O(1)\) of the total radius searched by the algorithm.  

\begin{restatable}[Total radius in terms of virtual radius]{lemma}{workovertime}
    The sum of the radii searched by \Cref{alg:quadratic-decay} is at most \(O(1)\) times the virtual radius at the end of \Cref{alg:quadratic-decay}.  
    \label{lem:quadratic-virtual-radius-bounds-total-radius}
\end{restatable}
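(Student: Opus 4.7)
The plan is to integrate rates of work across time. At any instant during the execution of \Cref{alg:quadratic-decay}, the active threads form a linked list of some length $L$ (where $L \le t$ at the beginning of day $t$), and the thread at position $i$ in this list is running at rate $\frac{1}{i^2 \ln^2 i}$ (with rate $1$ for $i=1$). Therefore the instantaneous total rate at which radius is being accumulated across all active threads is
\[
\sum_{i=1}^{L} \frac{1}{i^2 \ln^2 i} \;\le\; 1 + \sum_{i=2}^{\infty} \frac{1}{i^2 \ln^2 i} \;=\; O(1),
\]
where convergence of the tail follows by comparison with $\sum 1/i^2$. Call this constant upper bound $C$.

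Next, I would observe that the virtual radius grows at rate exactly $1$ at every moment, since by definition it is the radius of the highest-rate active thread, which occupies position $i=1$ and thus runs at rate $1$. Writing $R_{\mathrm{virt}}(\tau)$ for the virtual radius at time $\tau$ and $R_{\mathrm{tot}}(\tau)$ for the total accumulated radius across all threads searched so far by the algorithm, we have for every $\tau$
\[
\frac{d}{d\tau} R_{\mathrm{tot}}(\tau) \;\le\; C \;=\; C \cdot \frac{d}{d\tau} R_{\mathrm{virt}}(\tau).
\]
Integrating from the start of the day until the moment the while-loop terminates, and using that both quantities start at $0$, yields $R_{\mathrm{tot}} \le C \cdot R_{\mathrm{virt}}$, which is the claim.

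The only subtlety to verify is the effect of pruning: when a thread at position $j$ is killed and removed from the linked list, the threads formerly at positions $>j$ shift up by one and thus speed up. However, after any such shift the set of occupied rate slots is still a subset of $\{\frac{1}{i^2\ln^2 i}\}_{i \ge 1}$ with at most one thread per slot, so the instantaneous-rate bound $\sum_i \frac{1}{i^2 \ln^2 i} \le C$ continues to hold. I would note this explicitly to rule out any worry that rearrangements could temporarily inflate the total rate beyond $C$. This is the main point that needs care; once it is noted, the lemma follows immediately from the integration argument above.
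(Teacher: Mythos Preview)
Your proof is correct and takes essentially the same approach as the paper: both arguments observe that at every moment there is at most one active thread per rate slot $\frac{1}{i^2\ln^2 i}$, so the total radius accumulated is at most $\bigl(1+\sum_{i\ge 2}\frac{1}{i^2\ln^2 i}\bigr)$ times the virtual radius. The paper phrases this as a per-slot contribution bound summed over $i$, while you phrase it as an instantaneous-rate bound integrated over time, but the content is identical; your explicit remark about pruning preserving the one-thread-per-slot invariant is a nice clarification.
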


\begin{proof}
    Denote the virtual radius at the end of \Cref{alg:quadratic-decay} by \(v\).  The virtual radius is the radius of the highest rate thread (rate 1).  Thus, the highest rate thread contributes \(v\) to the total work.  

    Over the run of \Cref{alg:quadratic-decay}, at any given point, there is at most one thread per \(i \ge 2, i \in \mathbb{N}\) with rate equal to \(\frac{1}{i^2 \ln^2 i}\).  Consider the amount the thread(s) with rate \(\frac{1}{i^2 \ln^2 i}\) increase the total radius searched over time.  The ratio of the radius searched by this thread to the highest rate thread is \(\frac{1}{i^2 \ln^2 i}\).  So, over the time that the highest rate thread accrued radius \(v\), this thread has increased the total radius searched by \(\frac{v}{i^2 \ln^2 i}\).  

    Summing over all threads, we can bound the total radius searched by 
    \[v + \sum_{i = 2}^\infty \frac{v}{i^2 \ln^2 i} < 2 v = O(1) \cdot v.\]
\end{proof}

Now, we account for the overhead of running the algorithm.  We note that it is not only the total radius searched that contributes to the work done by the algorithm.  We must be careful to account for the work that it takes to check for when threads are subsumed.  

\begin{lemma}[Runtime in terms of total radius]
    On a given day \(t\), the runtime of \Cref{alg:quadratic-decay} is at most \(O(1)\) times the total radius searched at the end of \Cref{alg:quadratic-decay}.
    \label{lem:quadratic-total-radius-bounds-runtime}
\end{lemma}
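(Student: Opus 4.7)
The plan is to charge each unit of runtime to a unit of virtual radius via a convergent series, and then invoke \Cref{lem:quadratic-virtual-radius-bounds-total-radius} to convert virtual radius into total radius.

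Step 1 (work per step of the $i$-th active thread). For the thread occupying position $i$ in the linked list of active threads, one scheduled step of $\alg$ requires: (a) walking the linked list to the $i$-th entry, costing $O(i)$; (b) executing one step of $\alg$, costing $O(1)$; and (c) the kill check of \Cref{algline:kill-check-quadratic}, which compares this thread against each of the $i-1$ faster active threads, each comparison costing $O(1)$ by \Cref{ass:metric-distance-computable}. So the work for one scheduled step of the $i$-th active thread is $O(i)$. Importantly, every rate slot $\tfrac{1}{i^2 \ln^2 i}$ holds at most one active thread at any given moment, since subsumed threads are removed from the list and lower-priority threads are promoted, so position in the list and rate are well-defined throughout.

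Step 2 (summing work rates). Let $v$ denote the virtual radius (i.e., the number of steps taken by the fastest active thread) at the end of day $t$. Relative to the fastest thread, the $i$-th position thread performs steps at rate $\tfrac{1}{i^2 \ln^2 i}$ (rate $1$ for $i = 1$), and thus performs work at rate $O(i) \cdot \tfrac{1}{i^2 \ln^2 i} = O\!\left(\tfrac{1}{i \ln^2 i}\right)$. Summing over all rate positions,
\[ O(1) + \sum_{i = 2}^{\infty} O\!\left( \tfrac{1}{i \ln^2 i} \right) \;=\; O(1),\]
where the series converges by the integral test, since $\int_{2}^{\infty} \tfrac{dx}{x \ln^2 x} < \infty$. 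Hence the total runtime on day $t$ is $O(v)$.

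Step 3 (conclude). By \Cref{lem:quadratic-virtual-radius-bounds-total-radius}, the total radius searched on day $t$ is at least $v$ (the fastest thread alone contributes $v$), so runtime on day $t$ is $O(v) \le O(1)\cdot(\text{total radius searched})$, as claimed.

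The main obstacle is the per-step overhead accounting: one must verify that the list-maintenance work is indeed $O(i)$ per step (which the chosen linked-list data structure makes routine) and confirm that the rates chosen in the algorithm yield a convergent overhead-weighted series $\sum \tfrac{1}{i \ln^2 i}$. The $\ln^2 i$ factor in the rate is essential here --- without it, the overhead-weighted series would be the harmonic series $\sum \tfrac{1}{i}$, which diverges. A minor remaining subtlety is the $O(t)$ apparent cost of initializing the linked list at the start of each day; this is handled either by lazy materialization (a thread is created only when the scheduler first grants it CPU time, so its $O(1)$ creation cost is absorbed into that step) or by noting that threads never granted any CPU time need never be instantiated.
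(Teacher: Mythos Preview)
Your proof is correct and follows essentially the same argument as the paper: per-step overhead of $O(i)$ for the thread in position $i$, summed against the rate $\tfrac{1}{i^2\ln^2 i}$ to obtain a convergent series $\sum \tfrac{1}{i\ln^2 i}$, yielding total work $O(v)$. Your Step~3 is slightly cleaner than the paper's phrasing---you observe directly that total radius $\ge v$ from the fastest thread alone, whereas the paper invokes \Cref{lem:quadratic-virtual-radius-bounds-total-radius} for what is really the trivial direction---and your remark on lazy materialization of threads addresses a minor initialization-cost issue that the paper leaves implicit.
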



\begin{proof}
    We assume that each operation of running a thread for ``one step" (\Cref{alg:quadratic-decay}, Line \ref{algline:run-threads-at-different-rates-quadratic}) can be done in \(O(1)\) time.  In particular we note that the interleaving schedule of the threads (based on rate) is fixed and does not depend on the input.  Thus it can be computed in advance.

    For each step that the \(i\)th fastest thread is run (\Cref{alg:quadratic-decay}, Line \ref{algline:run-threads-at-different-rates-quadratic}), the algorithm performs the following overhead: 
    \begin{itemize}
        \item The algorithm steps through the linked list to find the associated thread in \(O(i)\) time.
        \item The algorithm checks the thread against all faster threads to see if it has been subsumed for each of the \(i - 1\) faster threads.  This requires one distance query for each of the faster threads.  By Assumption \ref{ass:metric-distance-computable}, we have that this takes \(O(1)\) time.   
        \item If the thread has been subsumed, the algorithm removes it from the linked-list of active threads in \(O(1)\) time.
    \end{itemize}
    Thus, each step of this thread is accompanied by \(O(i)\) work.  

    Denote the virtual radius at the end of \Cref{alg:quadratic-decay} by \(v\).  Over the run of \Cref{alg:quadratic-decay}, the total work done by the thread(s) that run at the \(i\)th fastest rate is 
    \[v \cdot \frac{1}{i^2 \ln^2 i} \cdot O(i) = v \cdot O \left(\frac{1}{i \ln^2 i} \right).\]

    Summing over all threads, we have that the total work can be bounded by 
    \[v + \sum_{i = 2}^\infty v \cdot O \left(\frac{1}{i \ln^2 i} \right) = O(1) \cdot v.\]
    By \Cref{lem:quadratic-virtual-radius-bounds-total-radius}, we have that \(v\) is within \(O(1)\) of the sum of the radii searched by \Cref{alg:quadratic-decay}.  Thus, the total work done by \Cref{alg:quadratic-decay} is at most \(O(1)\) times the sum of the radii that it searches.
\end{proof}

A key part of our analysis is showing that even if all solutions close to today's solution \(S_t\) were seen very far in the past, our algorithm will eventually raise the rates of these threads to a reasonable amount.  We do this by showing that enough other threads will be subsumed, clearing the way for threads lower down the chain.  

\begin{lemma}[Slow threads have bounded lifetime]
    Let \(i\) be a thread that is running at rate \(\ge \frac{1}{(k - 1)^2 \ln^2 (k - 1)}\) in \Cref{alg:quadratic-decay} at some point. Let \(j\) be another thread running at a slower rate than \(i\).  Thread \(j\) is subsumed in at most \(k^3 \ln^2 k \cdot \md(S_i, S_j)\) additional units of virtual radius.   
    \label{lem:quadratic-subsumed-time-bound}
\end{lemma}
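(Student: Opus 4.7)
The plan is to track the quantity $\radius(i) - \radius(j)$ starting from the specified moment. Thread $j$ becomes subsumed (by $i$, or by whichever thread $i'$ has absorbed $i$ in the meantime) as soon as $\radius(i) - \radius(j) \ge \md(S_i, S_j)$, so it suffices to lower-bound the rate at which this difference grows with respect to virtual radius and convert that into a time for it to reach $\md(S_i, S_j)$.

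The technical core is the invariant $\rate(i) - \rate(j) \ge \alpha_k := \frac{1}{(k-1)^2 \ln^2(k-1)} - \frac{1}{k^2 \ln^2 k}$ throughout the relevant window. At the starting moment, the hypothesis forces thread $i$ into some linked-list position $p_i \le k-1$, and $j$ lies strictly below at position $p_j > p_i$. Positions only move up over time as faster threads kill slower ones, and $j$ cannot pass $i$ so long as both are active; if $i$ is absorbed by some $i'$, the analysis convention sets $\rate(i) = \rate(i')$, which only increases $i$'s analysis-rate, while $j$ remains below $i'$ in the list. Hence at every moment the configuration satisfies $p_i \le k-1$ and $p_j \ge p_i + 1$. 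Writing $f(p) = 1/(p^2 \ln^2 p)$ with $f(1) = 1$, a direct calculation shows that $f$ is convex on $p \ge 1$, so its consecutive gap $g(p) := f(p) - f(p+1)$ is decreasing in $p$; consequently the minimum of $\rate(i) - \rate(j) = f(p_i) - f(p_j)$ over all admissible configurations is attained at $p_i = k-1$, $p_j = k$, giving exactly $\alpha_k$. A mean value theorem estimate of $f$ on $[k-1,k]$ then yields $\alpha_k \ge 2/(k^3 \ln^2 k)$.

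Combining, after $T$ additional units of virtual radius the difference $\radius(i) - \radius(j)$ has grown by at least $T \alpha_k$, and choosing $T = k^3 \ln^2 k \cdot \md(S_i, S_j)$ makes this exceed $\md(S_i, S_j)$. If $j$ is still active at that point, then either $i$ is also active and the kill condition on Line~\ref{algline:kill-condition-quadratic} fires directly, or $i$ has been absorbed by some $i'$; in the latter case combining \Cref{lem:subsumes-ineq} with the triangle inequality gives $\md(S_{i'}, S_j) \le \md(S_{i'}, S_i) + \md(S_i, S_j) \le (\radius(i') - \radius(i)) + (\radius(i) - \radius(j)) = \radius(i') - \radius(j)$, so $i'$ subsumes $j$. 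The main obstacle I anticipate is the rate-gap invariant itself: the subtlety is that $j$'s position can jump upward whenever an intermediate thread is killed, shrinking the gap, so one must argue that the worst case really is pinned at $(p_i, p_j) = (k-1, k)$ throughout, regardless of these intermediate movements; this hinges both on the convexity of $f$ and on the structural fact that $j$ never overtakes $i$ in the active list.
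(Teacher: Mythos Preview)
Your proposal is correct and follows essentially the same route as the paper: track $\radius(i)-\radius(j)$, lower-bound $\rate(i)-\rate(j)$ by $f(k-1)-f(k)$, convert this into a bound of order $1/(k^3\ln^2 k)$, integrate over virtual radius, and finish with the triangle-inequality step through whichever $i'$ currently subsumes $i$ (exactly the paper's closing argument using \Cref{lem:subsumes-ineq}).

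The one place you are actually more careful than the paper is the rate-gap bound. The paper simply writes $\rate(i)-\rate(j)\ge \frac{1}{(k-1)^2\ln^2(k-1)}-\frac{1}{k^2\ln^2 k}$ from the hypothesis ``$\rate(i)\ge f(k-1)$ and $j$ is slower,'' without justifying why $\rate(j)\le f(k)$ when $i$ might sit well above position $k-1$; your convexity argument (decreasing consecutive gaps of $f$, together with $p_j\ge p_i+1$ and $p_i\le k-1$) is what actually pins the minimum at $(p_i,p_j)=(k-1,k)$ and handles the concern you flag about $j$'s position jumping upward. For the numerical step the paper does a direct algebraic simplification (replacing $\ln^2(k-1)$ by $\ln^2 k$ and computing $\frac{k^2-(k-1)^2}{k^2(k-1)^2\ln^2 k}$) where you use the mean value theorem; both give $\alpha_k \ge c/(k^3\ln^2 k)$ with $c\ge 1$, which is all that is needed.
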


\begin{proof}
    Assume for the sake of contradiction that \(j\) is not subsumed in the next \(k^3 \ln^2 k \cdot \md(S_i, S_j)\) units of virtual radius.  For \(i\) faster than \(j\), and thread \(j\) not subsumed, \Cref{alg:quadratic-decay} maintains that \(\mathrm{rate}(i) > \mathrm{rate}(j)\).  Thus, \((\radius(i) - \radius(j))\) is always nonnegative, and monotonically nondecreasing with virtual radius. 
    Over the next \(k^3 \ln^2 k \cdot \md(S_i, S_j)\) units of virtual radius, we have that 
    \[\Delta (\radius(i) - \radius(j)) \ge k^3 \ln^2 k \cdot \md(S_i, S_j) \left(\mathrm{rate}(i) - \mathrm{rate}(j) \right),\]
    where \(\Delta (\radius(i) - \radius(j))\) is the change in \(\radius(i) - \radius(j)\).
    Since \(\mathrm{rate}(i) \ge \frac{1}{(k - 1)^2 \ln^2 (k - 1)}\) and \(i\) is faster than \(j\), we have that 
    \begin{align*}
        \mathrm{rate}(i) - \mathrm{rate}(j) &\ge \frac{1}{(k - 1)^2\ln^2 (k - 1)} - \frac{1}{k^2 \ln^2 k} \\
        &\ge \frac{1}{(k - 1)^2 \ln^2 k} - \frac{1}{k^2 \ln^2 k} \\
        &= \frac{2k + 1}{k^2 (k - 1)^2 \ln^2 k} \\
        &\ge \frac{1}{k^3 \ln^2 k}.
    \end{align*}
    
    Thus, after \(k^3 \ln^2 k \cdot \md(S_i, S_j)\) units of virtual radius, we have that 
    \begin{align*}
        \Delta (\radius(i) - \radius(j)) &\ge \md(S_i, S_j) \\
        \radius(i) - \radius(j) &\ge \md(S_i, S_j).
    \end{align*}
    If thread \(i\) is still active at the end of these iterations, this implies that thread \(i\) kills thread \(j\) (\Cref{alg:quadratic-decay}, line \ref{algline:kill-condition-quadratic}).  Otherwise let thread \(i^*\) be the active thread that subsumes thread \(i\).  By \Cref{lem:subsumes-ineq}, we have that 
    \[\md(S_i, S_{i^*}) \le \radius(i^*) - \radius(i).\]
    Thus, we get that 
    \begin{align*}
        \md(S_{i^*}, S_j) &\le \md(S_{i^*}, S_i) + \md(S_i, S_j) \\
        &\le \radius(i^*) - \radius(i) + \radius(i) - \radius(j) \\
        &\le \radius(i^*) - \radius(j),
    \end{align*}
    which implies that thread \(i^*\) will kill thread \(j\).  Thus, thread \(j\) must be subsumed, and we reach a contradiction.  
\end{proof}

\begin{theorem}[Competing with \(k\) trajectories online in radius \emph{and} runtime]
    \Cref{alg:quadratic-decay} is \(O(k^4 \ln^2 k)\)-competitive with any set of \(k\) trajectories in the total radius that it searches.  Furthermore, the total runtime of the algorithm over \(T\) days can be bounded by \(O(1)\) times the total radius that it searches.
    \label{thm:online-competitive-with-runtime}
\end{theorem}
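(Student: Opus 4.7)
The second assertion is immediate: \Cref{lem:quadratic-total-radius-bounds-runtime} gives total runtime $O(1)$ times the total radius searched on each day, and summing over days preserves the bound. For the competitive ratio in radius, \Cref{lem:quadratic-virtual-radius-bounds-total-radius} reduces the task to bounding $\sum_t V_t$ (where $V_t$ is the virtual radius at the end of day $t$) by $O(k^4 \ln^2 k) \cdot \mathrm{OPT}$ for every fixed collection of $k$ offline trajectories of cost $\mathrm{OPT}$. Fix such a collection; for each day $t$ let $i^{*}(t)$ be its assigned trajectory, $t' = \prev^{i^{*}(t)}(t)$, and $D_t = \md(S_{t'}, S_t)$ (with $P_0$ in place of $S_{t'}$ if $t$ is the first day of its trajectory). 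Routing the triangle inequality through the offline predictions $P_{t'}^{i^{*}(t)}$ and $P_t^{i^{*}(t)}$ gives $D_t \le \mathrm{hit}_{t'} + \mathrm{move}_{t' \to t} + \mathrm{hit}_t$, and telescoping over the $k$ trajectories yields $\sum_t D_t \le 2 \cdot \mathrm{OPT}$.

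The heart of the argument is to show $V_t \le O(k^4 \ln^2 k) \cdot D_t$, following the high-level dichotomy sketched in the introduction. Either the algorithm completes before the ``virtual thread'' of $t'$ --- meaning $t'$ itself or whichever active thread has absorbed it through a chain of subsumptions --- reaches rate $\ge 1/(k^2 \ln^2 k)$, in which case $V_t$ is already within the desired bound; or this rate is attained after some amount of virtual radius, and \Cref{lem:subsume-completion} together with \Cref{lem:subsumes-ineq} ensures that an additional $O(k^2 \ln^2 k) \cdot D_t$ virtual radius suffices for some active thread to reach $S_t$. To bound the time to promotion, observe that every thread strictly above $t'$ in the linked list corresponds to a day in $(t',t)$, and hence to a trajectory \emph{other} than $i^{*}(t)$ (because $t'$ is by definition the most recent previous day of $i^{*}(t)$). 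As soon as more than $k-1$ such threads remain, pigeonhole produces two in a common offline trajectory, and \Cref{lem:subsumes-ineq} together with \Cref{lem:quadratic-subsumed-time-bound} (applied using the fastest thread, at rate $1$, as the witness) guarantees that the earlier of the two is subsumed within $O(k^3 \ln^2 k \cdot d)$ additional virtual radius, where $d$ is their mutual distance; this $d$ is in turn bounded by hit and movement costs of that trajectory over the interval. Iterating at most $k$ times pushes the virtual thread of $t'$ into the top $k$.

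The main obstacle will be the amortization over these promotion events. Naively each of the $O(k)$ subsumptions used to promote $t'$ on day $t$ could incur a distance $d$ unrelated to $D_t$, and a per-day sum of $d$'s could blow up by the lengths of the other trajectories. The plan is to charge each subsumption to the segment of offline movement and hit cost its two endpoints straddle, and to argue that across all days each offline edge is billed only $O(k)$ times --- one charge for each trajectory that could host such a pigeonhole pair. This would give $\sum_t (\text{subsumption cost on day } t) = O(k \cdot \mathrm{OPT})$, which combined with the $O(k^3 \ln^2 k)$ factor from \Cref{lem:quadratic-subsumed-time-bound} and the $O(k^2 \ln^2 k) \cdot D_t$ completion term yields the stated $O(k^4 \ln^2 k) \cdot \mathrm{OPT}$ bound. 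Making this amortization airtight --- and in particular verifying that the distances along the promotion chain can indeed be bounded by the appropriate triangle-inequality routes through $P_{t'}^{i^{*}(t)}$ and $P_t^{i^{*}(t)}$ --- is the step I expect to require the most care.
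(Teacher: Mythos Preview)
Your overall route matches the paper's: reduce to bounding each day's virtual radius via \Cref{lem:quadratic-virtual-radius-bounds-total-radius}, argue that the thread of $t' = \prev^{\,i^*(t)}(t)$ (or its subsumer) is eventually promoted into the top $k$ ranks by a pigeonhole-plus-\Cref{lem:quadratic-subsumed-time-bound} argument, then finish with an additional $O(k^2 \ln^2 k)\cdot D_t$ virtual radius via \Cref{lem:subsume-completion}. The amortization idea --- each offline edge charged $O(k)$ times --- is also what the paper does, organized by first fixing the trajectory $i$ of the current day and observing that the intervals $(\prev^{(i)}(t),\,t)$ for $t \in \mathcal{T}^{(i)}$ are pairwise disjoint, so every consecutive-solution edge of any other trajectory is hit once per choice of $i$.

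There is, however, a real gap in your promotion step. You write that pigeonhole among the threads above $t'$ produces two in a common trajectory, ``the earlier of the two is subsumed,'' and ``iterating at most $k$ times pushes the virtual thread of $t'$ into the top $k$.'' But the number of threads initially above $t'$ is $t - t' - 1$, which can be arbitrarily large; killing one thread per iteration does not reach the top $k$ in $O(k)$ rounds, and your later count of ``$O(k)$ subsumptions used to promote $t'$'' inherits the same error. The paper's fix is to collapse an \emph{entire} trajectory per iteration: pigeonhole among the \emph{top $k$} threads (all of which lie in the $k-1$ trajectories other than $i^*(t)$) to find a trajectory $q$ whose fastest representative $t^{(q)*}$ sits at rate $\ge 1/((k-1)^2\ln^2(k-1))$, and then apply \Cref{lem:quadratic-subsumed-time-bound} with $i = t^{(q)*}$ to \emph{every} other thread of $q$ in the interval simultaneously --- all of them are subsumed within $k^3 \ln^2 k$ times the total $S$-path length of trajectory $q$ over $(t', t)$. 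After that, trajectory $q$ contributes at most one active thread above $t'$, and at most $k-1$ such trajectory-collapses finish the promotion. This also corrects your parenthetical: the witness $i$ in \Cref{lem:quadratic-subsumed-time-bound} must be $t^{(q)*}$, not the rate-$1$ thread, since the lemma's bound is $k^3\ln^2 k \cdot \md(S_i, S_j)$ and you need that distance to be the intra-trajectory one you call $d$.
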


\begin{proof}
    Fix a collection of trajectories \(\left(\{P_t : t \in \mathcal{T}^{(1)}\}, \dots, \{P_t : t \in \mathcal{T}^{(k)}\} \right)\).  We begin by bounding the cost of \Cref{alg:quadratic-decay} on days that correspond to trajectory \(1\).  

    Consider a particular day \(t\) that corresponds to trajectory 1.  We claim that eventually, if the algorithm runs longs enough, there will be at most \(k - 1\) active threads with higher rates than thread \(\prev^{(1)}(t)\).  

    We proceed iteratively.  If there are \(\le k - 1\) active threads with higher rates than thread \(\prev^{(1)}(t)\), we are done.  Otherwise, of the \(k\) threads with rate \(1, \dots, \frac{1}{k^2 \ln^2 k}\), there must be two of them that belong to the same trajectory \(q\).  Let \(\mathcal{T}^{(q)}_t\) be the set of days between \(\prev^{(1)}(t)\) and \(t\) that are associated with trajectory \(q\).  We can bound the total distance between any two solutions in \(\mathcal{T}^{(q)}_t\) by the total distance the \(q\)th trajectory moves in this time as
    \[\le \sum_{t' \in \mathcal{T}^{(q)}_t \setminus \{\prev^{(q)}(t)\}} \md \left(t', \mnext^{(q)}(t')  \right).\]

    Let \(t^{(q) *}\) be the thread of \(\mathcal{T}^{(q)}_t\) with the highest rate.  By the selection of \(q\), we know that \(t^{(q) *}\) has rate at least \(\frac{1}{(k - 1)^2 \ln^2 (k - 1)}\).  By \Cref{lem:quadratic-subsumed-time-bound}, this means that all other members of \(\mathcal{T}^{(q)}_t\) will be subsumed in  
    \[ \le k^3 \ln^2 k \cdot \sum_{t' \in \mathcal{T}^{(q)}_t \setminus \{\prev^{(q)}(t)\}} \md \left(t', \mnext^{(q)}(t')  \right)\]
    additional units of virtual radius.  Thus, after this many units of virtual radius, there can be at most one thread belonging to trajectory \(q\) that has a higher rate than \(\prev^{(1)}(t)\).  We say that trajectory \(q\) has ``collapsed" between \(\prev^{(1)}(t)\) and \(t\).

    Iterating this argument at most \(k - 1\) times gives us that after at most 
    \[k^3 \ln^2 k \cdot \sum_{q = 2}^k \sum_{t' \in \mathcal{T}^{(q)}_t \setminus \{\prev^{(q)}(t)\}} \md \left(t', \mnext^{(q)}(t')  \right)\]
    iterations of the algorithm, there can be at most \(k - 1\) active threads with higher rate than thread \(\prev^{(1)}(t)\).  Thus, thread \(\prev^{(1)}(t)\) has rate \(\ge \frac{1}{k^2 \ln^2 k}\).  

    This means that in at most \(k^2 \ln^2 k \cdot \md(S_{\prev^{(1)}(t)}, S_t)\) additional units of virtual radius, we have that 
    \[\md(S_{\prev^{(1)}(t)}, S_t) \le \radius(\prev^{(1)}(t)).\]
    Therefore, by \Cref{lem:subsume-completion}, there must be some thread that solves day \(t\) in at most 
    \[k^3 \ln^2 k \cdot \sum_{q = 2}^k \sum_{t' \in \mathcal{T}^{(q)}_t \setminus \{\prev^{(q)}(t)\}} \md \left(t', \mnext^{(q)}(t')  \right) ~+~ k^2 \ln^2 k \cdot \md(S_{\prev^{(1)}(t)}, S_t)\]
    units of virtual radius, and by \Cref{lem:quadratic-virtual-radius-bounds-total-radius} the virtual radius of an algorithm is within a \(O(1)\) factor of the total radius searched. 

    Now, we can sum the total radius searched over all days \(t\) belonging to trajectory 1.  
    The total radius searched over these days can be bounded by 
    \begin{align*}
        &\le \sum_{t \in \mathcal{T}^{(1)}} \left[ O(k^3 \ln^2 k) \cdot \sum_{q = 2}^k \sum_{t' \in \mathcal{T}^{(q)}_t \setminus \{\prev^{(q)}(t)\}} \md \left(t', \mnext^{(q)}(t')  \right) + O(k^2 \ln^2 k) \cdot \md(S_{\prev^{(1)}(t)}, S_t)\right] \\
        &\le O(k^3 \ln^2 k) \sum_{q = 1}^k \sum_{t \in \mathcal{T}^{(q)}} \md(S_{\prev^{(q)} (t)}, S_t) \\
        &\le O(k^3 \ln^2 k) \cdot \sum_{q = 1}^k \cost(\{P_t : t \in \mathcal{T}^{(q)}\}), 
    \end{align*}
    where the last line follows by \Cref{cor:predict-yesterday-is-competitive} because \(\sum_{t \in \mathcal{T}^{(q)}} \md(S_{\prev^{(q)} (t)}, S_t)\) is the cost of ``Predict yesterday's solution" run on trajectory \(q\) alone.  

    Finally, the same bound holds for all \(k\) trajectories, so we can conclude that the total radius searched by \Cref{alg:quadratic-decay} over all days is 
    \[\le O(k^4 \ln^2 k) \cdot \sum_{q = 1}^k \cost(\{P_t : t \in \mathcal{T}^{(q)}\}),\]
    and therefore \Cref{alg:quadratic-decay} is \(O(k^4 \ln^2 k)\) competitive against the \(k\) trajectories.  Finally, \Cref{lem:quadratic-total-radius-bounds-runtime} tells us that the total runtime of \Cref{alg:quadratic-decay} is at most \(O(1)\) times the total radius searched by \Cref{alg:quadratic-decay}.
\end{proof}

\begin{remark}
    Instead of running the threads at quadratically decaying rates, i.e., the \(i\)th fastest thread runs at rate \(\frac{1}{i^2 \ln^2 i}\), we could run them at harmonically decaying rates, i.e., the \(i\)th fastest thread runs at rate \(\frac{1}{i \ln^2 i}\).  This would result in an improved competitive ratio of \(O(k^3 \ln^2 k)\) against any \(k\) trajectories, in the sum of the radii searched (number of steps taken of the subroutine \(\alg\)).  However, the total runtime could scale as badly as an \(O(T)\) factor times the sum of the radii, where \(T\) is the time horizon.  

    The competitive ratio of the harmonic rate decay strategy is still worse than that of the reduction to \(k\)-server, which achieves competitive ratio \(O(k^2)\) (\Cref{cor:online-ball-search-via-k-server}).  However, the rate decay strategy also has the benefit that it is oblivious to the setting of \(k\), and therefore achieves the guarantee for all \(k\) simultaneously.  
\end{remark}

\section{Future Directions}


Our work leaves open a number of interesting research directions.  
\begin{itemize}
    \item Investigate whether it is possible to get more efficient algorithms for settings with a specific structure.  In this work, we considered the setting where we can group instances using a \(k\)-wise partition.  Are there other forms of structure that we can take advantage of? 
    \item Provide an algorithm with an improved competitive ratio for the online setting, or provide a lower bound for this setting.  Currently, the best lower bound we have is \(\Omega(k)\), which carries over from the offline setting (\Cref{rem:lower-bound}).  
    It would also be interesting to match the competitive ratio \(O(k^2)\) of the \(k\)-server based algorithm, but achieve it for all \(k\) simultaneously.  
    \item Find other warm-start and/or local-search type settings to which our techniques are applicable.
    \item In this work, we define the Online Ball Search problem (\Cref{def:online-ball-search}), and investigate some connections to the \(k\)-server problem.  Can algorithms for Online Ball Search be transformed into algorithms for \(k\)-server?  Alternatively, can we show a separation between these two problems?
\end{itemize}




\bibliographystyle{alpha}
\bibliography{ref}

\end{document}